\theoremstyle{plain}
\newtheorem{theorem}{Theorem}
\newtheorem{lemma}[theorem]{Lemma}
\newtheorem{corollary}[theorem]{Corollary}
\theoremstyle{definition}
\theoremstyle{remark}
\title{\bf  Offline Change Detection under  Contamination}
\author{\vspace{0.5in}\\
\textbf{Sujay Bhatt, Guanhua Fang, Ping Li} \\\\
Cognitive Computing Lab\\
Baidu Research\\
10900 NE 8th St. Bellevue, WA 98004, USA\\\\
\{sujaybhatt.hr, fanggh2018, pingli98\}@gmail.com\\
}
\date{}
\begin{document}
\maketitle

\begin{abstract}\vspace{0.3in}
 \noindent In this work, we propose a non-parametric and robust change detection algorithm to detect multiple change points in time series data under contamination. The contamination model is sufficiently general, in that, the most common model used in the context of change detection -- Huber contamination model -- is a special case. Also, the contamination model is  oblivious and arbitrary. The change detection algorithm is designed for the offline setting, where the objective is to detect changes when all data are received. We only make weak moment assumptions on the inliers (uncorrupted data) to handle a large class of distributions. The robust scan statistic in the algorithm is fashioned using mean estimators based on influence functions. We establish the consistency of the estimated change point indexes as the number of samples increases, and provide empirical evidence to support the consistency results.
\end{abstract}

\newpage

\section{Introduction}
Change point detection in time series data is the task of identifying changes in the underlying data generation model and can be traced back to the initial work of~\cite{Pag54,Pag55} in the context of statistical process/quality control. This simple and elegant framework has been deployed in diverse applications such as bioinformatics~\citep{VB10}, finance~\citep{PS01,PP17}, biology~\citep{Sig13}, climatology~\citep{VHNC10}, metric learning~\citep{LBA14}; to name a few. 

Change detection methods are mainly classified into online and offline settings. In the online setting, the aim is to detect changes as soon as they occur in real-time by optimizing an objective that trades-off detection delay and false alarm; see~\cite{PH08} for a detailed introduction and~\cite{XV21} for a survey of recent developments. In contrast, in the offline setting, the changes need to be detected in a retrospective manner by `segmenting' the entire dataset~\citep{AC17}. Here the objective is to design consistent algorithms and empirically validate using well-known metrics such as F1-Score~\citep{DP20}, Hausdorff metric~\citep{HL10}, etc; see~\cite{TOV20} for detailed overview of the methods and recent developments. 

In this work, we consider the offline setting and contribute to the literature by relaxing the common assumptions. To motivate the setup considered with an example, consider monitoring mean shifts in non-stationary processes using Wireless Sensor Networks (WSN)~\citep{akyildiz2010wireless, cui2019wireless}. In addition to the inherent challenges such as dealing with non-i.i.d data~\citep{Tar19} and heavy-tails~\citep{FR19,bhatt2021extreme}, modern machine learning applications have to deal with the introduction of adversarial examples in the dataset~\citep{KGB18,JL17}. Specifically, when the WSNs are used in  applications such as healthcare (EEG/ fMRI), environmental impact monitoring, energy consumption, etc; the sensors are typically deployed in harsh conditions. This increases data corruption or erroneous readings during
transmission. When the data from a collection of near-by sensors are logged for surveillance and event-classification, any inference procedure should account for the following salient features: \textit{non-i.i.d, outliers, and adversarial contamination}. This motivates the development of change detection algorithms in the offline setting~\citep{AC17,TOV20} that can tackle all the above challenges  in a systematic manner. 

\subsection{Main Results}
We propose a non-parametric change detection algorithm that can deal with non-i.i.d data, outliers, and a weak form of adversarial contamination to identify the change points in a consistent manner. Specifically, we make the following contributions:
\begin{enumerate}
    \item Non-parametric algorithms feature a key quantity known as \textit{scan statistic}, for example CUSUM statistic of~\cite{Pag54}, which is required to `scan' the dataset to identify the change points. We propose a scan statistic based on influence functions proposed by~\cite{Cat12} that can handle outliers and heavy-tails, to deal additionally with contamination. We consider a contamination model, where the
    outliers (corrupted data) are correlated to each other and to inliers (uncorrupted data). The inliers can also be correlated to one another. The resulting robust  non-parametric algorithm \texttt{RC-Cat} announces a change if the scan statistic exceeds a pre-specified threshold, provided the scan statistic is a local maximum. This additional sophistication of local search methods was the introduced in~\cite{NZ12} and developed for the robust version in~\cite{LY21}, to mainly avoid overestimation~of~change~points. 
    
    \item A natural way to theoretically evaluate change detection algorithms is to establish consistency of the estimated change point indexes as the number of samples increases. In particular, we show that \texttt{RC-Cat} is consistent in the presence of contamination, i.e, as the number of data points~$n \uparrow \infty$,
    \[
    \mathbb{P}\Big(\widehat K = K, \max_{k=1}^{\widehat K} |\widehat \tau_k - \tau_k| \leq w \Big) \rightarrow 1,
    \]
    where~$K$ is the number of true change points located at~$\tau_k,~k \in \{1,\cdots, K\}$ and $\widehat K$ is the number of detected change points announced at~$\widehat \tau_k,~k \in \{1,\cdots,\widehat K\}$, and~$w>0$ is parameter that related to the window length considered.
\end{enumerate}

\subsection{Related Literature}
In the context of robust change detection, a common model of contamination that is considered to design algorithm is the Huber contamination model~\citep{Hub64}. In this model, the data generation model is a mixture model~$(1-\eta)F+ \eta Q$, where~$F$ is the true distribution before the change and~$Q$ is any arbitrary distribution with a probability~$\eta$. Using such a model of contamination,~\cite{Hus13} make use of M-estimation idea from robust statistics~\citep{Hub04} to address the change point detection problem in the context of regression.~\cite{FR19} consider penalized M-estimation based procedure that can deal with outliers.~\citep{PBR20} showed that Huber contamination model is equivalent to assuming a heavy-tailed noise for the i.i.d data. In light of this,~\cite{YC22} propose a scan statistic based on U-statistics to deal with heavy-tailed noise distributions. The setup and analysis considered in this work is closest to~\cite{LY21}, however, with the following key differences:
\begin{itemize}
    \item~\cite{LY21} consider change detection under i.i.d data. While this is a useful starting point, it only serves as a crude approximation when the data is gathered from heterogeneous sources~\citep{MMS08} and is in general non-i.i.d~\citep{Tar19}.
    \item The scan statistics is fashioned using the robust estimator (RUME) in~\cite{PBR20}. RUME uses half of the samples to identify the shortest interval containing at least $(1- \eta)n$ fraction of the points, and then the remaining half of the points is used to return an estimate of the mean. While this is acceptable in the case of robust mean estimation, it has clear disadvantages in the context of change point detection, where the initial segregation might hide/ remove the true change points. Another feature of RUME is that the amount of contamination that the estimator can handle is limited, and this limits the applicability in many applications. 
\end{itemize}
In contrast, our algorithm deals with non-i.i.d inliers and contamination, where the inliers only have a bounded second moment. Also, unlike~\cite{LY21}, we do not segregate the data for robust mean estimation, which avoids the problem of loosing change points. Using empirical results, we further show that, not only the proposed algorithm is more general than that in~\cite{LY21}, it is faster and obtains better detection performance across different settings. 

\newpage

\section{Mean Estimation under Contamination}
In this section, we propose a robust mean estimator that can deal with  \textbf{non-i.i.d data} with arbitrary contamination. The estimator is based on influence functions proposed in~\cite{Cat12} and~\cite{CG17}, and is further developed in~\cite{BFLG2022a, BFLG2022b}. Let~$\{X_t\}_{t=1}^n$ be a collection of real-valued random variables. Let~$\mathcal{F}_0$ denote the trivial sigma algebra, and let~$\mathcal{F}_t$ denote the sigma-algebra generated by the set~$\{X_1,X_2,\cdots,X_t\}$, whence~$X_{t}$ is~$\mathcal{F}_t-$measurable. Let~$[n]:=\{1,2,\cdots,n\}$.
\begin{itemize}
    \item[\textbf{C1.}] The set $\{X_t\}_{t \in [n]}$ is such that the (unknown) conditional expected value
    \[
     \forall~t \in [n],~~ \mathbb{E}\Big[X_t | \mathcal{F}_{t-1} \Big] = \mu_t.
    \]
    \item[\textbf{C2.}] The conditional second moment of~$X_t$ is bounded, i.e, for a known~$\mathcal{M}>0$,
    \[
      \forall~t \in [n],~~    \mathbb{E}\Big[X_t^2 | \mathcal{F}_{t-1} \Big] \leq \mathcal{M}.
    \]
\end{itemize}
It is easy to see that i.i.d is a special case that satisfies \textbf{C1} and \textbf{C2}. However, the model allows for more general dependencies, see~\cite{SLC12}.

\subsection{Contamination Model}
We assume that for some corruption rate $0<\eta<1$, an adversary may change at most~$\eta k$ of any sub-sequence of $\{X_t\}_{t \in [n]}$ with length $k$ at least $k \geq k_0$, to arbitrary values. The resulting set of observations will be $\widetilde X_1,\widetilde X_2, \cdots, \widetilde X_n$, so that
\begin{equation} \label{e:corrupt}
  \sup_{i \in [n-k]} \sum_{j=1}^k \mathcal{I}\Big( \widetilde X_{i+j}\not= X_{i+j} \Big) \leq \eta k,
\end{equation}
where~$\mathcal{I}(\cdot)$ denotes the indicator function and $k \geq k_0$ with $k_0$ being a fixed integer such that $k_0 = \Omega(\log n)$. 
The task is to estimate the true mean $\mu:= \frac{1}{n} \sum_{i=1}^n \mu_i$ based on the observations~$\widetilde X_1,\widetilde X_2,\cdots, \widetilde X_n$. This contamination model is widely-studied in machine learning for i.i.d data; see~\citet{CSV17,HL18,LM21} and the related references for existing results. The contamination model is similar to~$\mathcal{I} \cup \mathcal{O}$ model of~\cite{LL19} and also shares similarities with the Huber contamination model~\citep{Hub64}. While the contamination can be arbitrary, we do not allow the possibility where the adversary corrupts a fraction of the sample possibly with the knowledge of the whole dataset to intentionally hide the change points, i.e, the contamination is \textit{weakly adversarial}. 
%This is a reasonable assumption in the context of change detection, as removal of change points simply invalidates the main objective of the detector. Without additional side information, we believe, a strong contamination model as in~\cite{DKKLMS17} is not a useful object of study for the change detection problem. 

\textbf{Remark}.~ The well-known Huber contamination model in change detection~\citep{LY21} is a special case of the considered adversarial model. Let~$\epsilon \in [0,1]$ denote the outlier distribution probability in the Huber contamination model, i.e, the data is generated as~$(1-\epsilon) P + \epsilon Q$, where $P$ is the true distribution and~$Q$ is any arbitrary distribution. Let the empirical fraction~$\widehat{\epsilon}_n = \sup_{i \in [n-k]} \frac{1}{k}\sum_{j=1}^k \mathcal{I} \bigl( \tilde X_{i+j}\not= X_{i+j}\bigr)$. According to a recent result in~\cite{BFLG2022b}, with probability at least~$1 - \beta$, we have for all~$k$
\begin{equation*}
    \widehat{\epsilon}_n \leq \epsilon + \underbrace{1.7 \sqrt{\epsilon (1-\epsilon)} \sqrt{\frac{\log(\log(2n)) + 0.72 \log\frac{ 10.4 n}{\beta}}{k}}}_{:=f(\beta,\epsilon,k)}.
\end{equation*}
Fix~$k_0> c \log (n/\beta)$ and set 
\begin{equation*} 
\epsilon + f(\beta,\epsilon,k_0) =: \eta.
\end{equation*}
Clearly, for all~$k \geq k_0$, we have the corruption fraction~$\widehat{\epsilon}_n$ to be at most~$\eta$ with a very high probability. 

\subsection{Mean Estimation with Influence Functions} \label{sec:Mean.IF}
The idea of using influence functions for robust mean estimation is not new~\citep{Hub04}. However these M-estimators are unable to scale gracefully with dimension~\citep{Mar76,DG92}, and~\cite{PBR20} show that the bias scales polynomially with dimension. This led to the development of a class of M-estimators introduced by~\cite{Cat12} that can be used to obtain dimension-free bounds in the vector settings~\citep{CG17}. With a similar future objective in mind, we make use of the influence functions proposed in these works to fashion a robust estimator that has minimax optimal asymptotic bias in the contamination parameter when the data sequence is more general than~$i.i.d$. 

Consider a non-decreasing function $\psi:\, \mathbb{R} \rightarrow \mathbb{R}$ such that
\begin{equation*} 
  -\log(1-x+x^2/2)\leq \psi(x)\leq \log(1+x+x^2/2)
\end{equation*}
for all $x\in \mathbb{R}$ as in~\cite{Cat12}. One can choose such a function that is bounded:
specifically, we assume that for some $0<A<\infty$, 
\begin{equation} \label{e:bound}
  |\psi(x)|\leq A \ \ \text{for all} \ \ x\in \mathbb{R}.
\end{equation} 
From~\cite{Cat12}, the narrowest possible choice for the influence function has $A=\log
2$, and is given by
\begin{equation} \label{eq:psi.cat}
\psi(x)=
\begin{cases}
-\log(1-x+x^2/2),\hfill~~0\leq x\leq 1,\\
~\log(2),\hfill~~x\geq 1,\\
-\psi(-x),\hfill~~-1 \leq x \leq 0, \\
-\log(2),\hfill~~x\leq -1.
\end{cases}
\end{equation}
We consider an estimator based on soft-truncation after re-scaling, defined by
\begin{equation} \label{eq:rob.est}
    \widehat \mu_{\eta}:= \frac{\alpha}{n} \sum_{i=1}^n \psi(\frac{\widetilde X_i}{\alpha}),
\end{equation}
where~$\alpha>0$ is a re-scaling parameter, and the uncontaminated version is given as
\begin{equation} \label{eq:emp.mean}
    \widehat \mu:= \frac{\alpha}{n} \sum_{i=1}^n \psi(\frac{X_i}{\alpha}).
\end{equation}
In the absence of contamination, depending on the choice of~$\psi(\cdot)$ and~$\alpha$, the estimator~\eqref{eq:emp.mean} can closely approximate the empirical mean; see~\cite{Hol19} for example. Similar estimator for i.i.d data was considered in~\cite{Hol19}, where the deviation bounds were characterized using well-known PAC Bayesian inequalities inspired by Donsker-Varadhan's variational formula~\citep{Cat04,DE11}. However, since the data are not i.i.d in our case, we need a different approach to characterize the deviations, and this is the main contribution of this section.

\begin{theorem} \label{thm:dev.bnd}
Consider a collection of random variables~$\{ \widetilde X_t\}_{t \in [n]}$. Let~$\alpha = \sqrt{\frac{\mathcal{M}}{2\Big( \frac{\log(2/\delta)}{n} + 2 A \eta \Big)}}$ and $\delta \in (0,1)$. The estimator~\eqref{eq:rob.est} satisfies 
\begin{equation} \label{eq:dev.bnd}
    |\widehat \mu_{\eta} - \mu| \leq \sqrt{2 \mathcal{M} \Big( \frac{\log(2/\delta)}{n} + 2 A \eta \Big)},
\end{equation}
with probability at least~$1 - \delta$.
\end{theorem}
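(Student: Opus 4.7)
The plan is to split the total error into a \emph{contamination} part and a \emph{statistical} part, i.e.\ $|\widehat\mu_\eta - \mu| \leq |\widehat\mu_\eta - \widehat\mu| + |\widehat\mu - \mu|$, and bound each piece separately before optimizing the re-scaling parameter $\alpha$.

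For the contamination part, I would use the boundedness $|\psi|\le A$ from \eqref{e:bound} together with the corruption bound \eqref{e:corrupt} applied with $k=n$ (which is legitimate once $n \geq k_0$). Since at most $\eta n$ of the indices satisfy $\widetilde X_i \neq X_i$, each contributing at most $2A$ to the difference, I get directly
\begin{equation*}
|\widehat\mu_\eta - \widehat\mu| \leq \frac{\alpha}{n}\sum_{i:\,\widetilde X_i\neq X_i}\bigl|\psi(\widetilde X_i/\alpha) - \psi(X_i/\alpha)\bigr| \leq 2A\alpha\eta.
\end{equation*}

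For the statistical part, I cannot invoke the PAC-Bayesian route used by Catoni and Holland because $\{X_i\}$ are not i.i.d.\ — this is where the work lies. Instead I would build a supermartingale from the Catoni inequality. Using the upper bound $\psi(x)\le\log(1+x+x^2/2)$, conditioning on $\mathcal F_{i-1}$, and invoking \textbf{C1}–\textbf{C2},
\begin{equation*}
\mathbb{E}\bigl[e^{\psi(X_i/\alpha)} \,\big|\, \mathcal F_{i-1}\bigr] \leq 1 + \frac{\mu_i}{\alpha} + \frac{\mathcal{M}}{2\alpha^2} \leq \exp\!\Big(\frac{\mu_i}{\alpha} + \frac{\mathcal{M}}{2\alpha^2}\Big).
\end{equation*}
Hence the process
\begin{equation*}
M_t := \exp\!\Big(\sum_{i=1}^t \psi(X_i/\alpha) - \sum_{i=1}^t \mu_i/\alpha - t\mathcal{M}/(2\alpha^2)\Big)
\end{equation*}
is a nonnegative supermartingale with $\mathbb{E}[M_0]=1$. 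Markov's inequality applied to $M_n$ at level $2/\delta$, followed by taking logs and dividing by $n/\alpha$, yields the one-sided bound
\begin{equation*}
\widehat\mu - \mu \leq \frac{\mathcal{M}}{2\alpha} + \frac{\alpha\log(2/\delta)}{n}
\end{equation*}
with probability at least $1-\delta/2$. Using the companion lower bound $\psi(x)\ge -\log(1-x+x^2/2)$ on the supermartingale built from $-\psi(-X_i/\alpha)$-type terms gives the matching inequality in the other direction; a union bound combines them into $|\widehat\mu - \mu| \leq \mathcal{M}/(2\alpha) + \alpha\log(2/\delta)/n$ with probability at least $1-\delta$.

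Adding the two bounds gives $|\widehat\mu_\eta - \mu| \leq \mathcal{M}/(2\alpha) + \alpha\bigl(\log(2/\delta)/n + 2A\eta\bigr)$, and the AM-GM-optimal choice of $\alpha$ (equating the two summands) is precisely the value stated in the theorem, producing the advertised bound $\sqrt{2\mathcal M(\log(2/\delta)/n + 2A\eta)}$. The main obstacle is the supermartingale construction: once the conditional moment control from \textbf{C2} is combined with Catoni's logarithmic envelope in the right way, the rest is bookkeeping and a one-variable minimization.
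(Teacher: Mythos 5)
Your proposal is correct and follows essentially the same route as the paper: the same decomposition into a statistical term and a contamination term bounded by $2A\alpha\eta$ via $|\psi|\le A$, the same supermartingale construction $Y_t = Y_{t-1}e^{\psi(X_t/\alpha)}e^{-\mu_t/\alpha - \mathcal{M}/(2\alpha^2)}$ with Markov's inequality at level $2/\delta$ in each direction, and the same balancing choice of $\alpha$. The only cosmetic difference is that you spell out why the corruption model with $k=n$ yields the $2A\alpha\eta$ term, which the paper leaves implicit.
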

A high-probability deviation bound for~$\widehat \mu$, i.e, in the absence of contamination is first characterized as a function of~$\alpha$, whence we obtain
\[
|\widehat \mu - \mu| \leq \frac{\mathcal{M}}{2\alpha} + \frac{\alpha \log(2/\delta)}{n}.
\]
From~\eqref{e:bound}, we have the following relation
\begin{equation*} 
    |\widehat \mu_{\eta} - \mu| \leq |\widehat \mu - \mu| + 2A\eta \alpha.
\end{equation*} 
This provides the deviation bound of the overall soft-truncation estimator~\eqref{eq:rob.est}. 

\begin{corollary} \label{cor:bnd.eta}
Let~$B>1$. Under the assumptions as in Theorem~\ref{thm:dev.bnd} such that~\eqref{eq:dev.bnd} holds, we have with probability at least~$1 - 2\exp{\Big(- \frac{A\eta}{B} n\Big)}$
\begin{equation}
    |\widehat \mu_{\eta} - \mu| \leq c_0 \sqrt{\mathcal{M} \eta}, 
\end{equation}
where~$c_0^2:= 2A (1/B + 2)$. 
\end{corollary}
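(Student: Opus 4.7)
The plan is to derive Corollary~\ref{cor:bnd.eta} as an immediate specialization of Theorem~\ref{thm:dev.bnd} by tuning the confidence parameter~$\delta$ so that the sample-size contribution inside the square root is absorbed into the contamination contribution~$2A\eta$. This balancing produces a clean $\sqrt{\mathcal{M}\eta}$ bound whose prefactor depends only on~$A$ and the free parameter~$B$.

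Concretely, I would choose
\[
    \delta \;=\; 2\exp\!\Big(-\frac{A\eta}{B}\,n\Big),
\]
which is admissible (i.e.\ lies in $(0,1)$) provided $A\eta n/B > \log 2$, a mild condition that holds for~$n$ large enough---precisely the asymptotic regime of interest in the paper. With this choice the failure probability in Theorem~\ref{thm:dev.bnd} becomes exactly $2\exp(-A\eta n/B)$ as claimed, and moreover $\log(2/\delta)/n = A\eta/B$. Substituting into the deviation bound~\eqref{eq:dev.bnd} gives
\[
    |\widehat\mu_\eta - \mu| \;\leq\; \sqrt{2\mathcal{M}\Big(\frac{A\eta}{B}+2A\eta\Big)} \;=\; \sqrt{2A\Big(\frac{1}{B}+2\Big)\mathcal{M}\eta} \;=\; c_0\sqrt{\mathcal{M}\eta},
\]
on the same high-probability event, which is exactly the stated conclusion.

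Since the derivation collapses to a one-line algebraic substitution, there is no genuine obstacle beyond invoking Theorem~\ref{thm:dev.bnd}. The only items worth sanity-checking are the admissibility of~$\delta$ (already discussed) and the role of the hypothesis $B>1$: it does not enter the algebra above and only serves to keep the constant $c_0 = \sqrt{2A(1/B+2)}$ at a small, finite multiple of $\sqrt{A}$, so that the $\sqrt{\mathcal{M}\eta}$ scaling is informative.
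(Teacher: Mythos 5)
Your proof is correct and is essentially the paper's own argument: the paper imposes the equivalent condition $n \geq \frac{B}{A\eta}\log(2/\delta)$ on the sample size, which at equality is exactly your choice $\delta = 2\exp(-A\eta n/B)$, yielding the same substitution $\log(2/\delta)/n \leq A\eta/B$ and the same constant $c_0$. Your added remarks on the admissibility of $\delta$ and the (inert) role of $B>1$ are accurate but not needed.
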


Corollary~\ref{cor:bnd.eta} obtains the deviation purely in terms of the contamination fraction, and
will be useful later in establishing the consistency of change detection algorithms. Another useful feature is that it informs the choice of segmentation window length that guarantees a tight deviation characterization. 

From~\eqref{eq:dev.bnd}, it is clear that there is an asymptotic ($n \uparrow \infty$) bias  of~$O(\sqrt{\mathcal{M} \eta})$ associated with the estimator owing to contamination. Also, when~$\mu_t \in [0,1]$ with a possibly heavy tail martingale difference noise-- a common assumption in bandits~\citep{LS20} and reinforcement learning~\citep{AJK19}-- the deviation bound and hence the bias can be written in terms of the (conditional) variance~$\sigma^2$ as~$O(\sigma \sqrt{\eta})$ by using the standard~$C_{r}$ inequality{\footnote{For any random variable~$Y$, real number~$\gamma$, and~$r>0$,
\[
|Y|^r \leq \max\{2^{r-1},1\} (|Y-\gamma|^r+|\gamma|^r).
\]}}.
This matches the minimax lower bound~\citep{DKKLMS17,HL18} that is shown to be information theoretically optimal. 

However, in general, as the deviation~\eqref{eq:dev.bnd} depends on the non-centered moment, it is sensitive to the location of the distribution.~\cite{CG17} propose a `shifting-device' approach to obtain centered estimates that can be used to obtain a deviation bound in terms of the conditional variance. This has been used for PAC-Bayesian analysis using influence functions in~\cite{Hol19}.

\newpage

\begin{theorem} \label{thm:cnt.est}
Consider the set of r.vs~$\{\widetilde X_t\}_{t \in [n]}$ such that~$\mu_t= \mu,~\forall~t$. Additionally, let~$\mathcal{V}$ denote an upper bound on conditional variance of the uncontaminated random variables. Let~$0<k<n$ denote the length of the data to create a shifting device. Let~$\alpha = \sqrt{\frac{(\mathcal{V} + \vartheta_k^2)}{2\Big( \frac{\log(2/\delta)}{n-k} + 2 A \eta \Big)}}$ with~$\vartheta_k = \sqrt{2 \mathcal{M} \Big( \frac{\log(2/\delta)}{k} + 2 A \eta \Big)}$. The estimator~\eqref{eq:rob.est} satisfies
\begin{equation}~\label{eq:dev.bnd.cent}
    |\widehat \mu_{\eta} - \mu| \leq \sqrt{2 (\mathcal{V}+ \vartheta_k^2 )\Big( \frac{\log(4/\delta)}{n-k} + 2 A \eta \Big)},
\end{equation}
with probability at least~$1 - \delta$.
\end{theorem}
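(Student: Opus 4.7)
The plan is to follow the ``shifting-device'' strategy of~\cite{CG17}: first use a short prefix of the data to build a rough pilot estimate of $\mu$, then re-run the soft-truncation estimator~\eqref{eq:rob.est} on the shifted (approximately centered) tail. Because the shifted data have conditional second moment controlled by the \emph{variance} $\mathcal{V}$ plus a small pilot-error term $\vartheta_k^2$, rather than by the uncentered second moment $\mathcal{M}$, the final bound~\eqref{eq:dev.bnd.cent} scales with $\mathcal{V}$ as advertised.

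First, apply Theorem~\ref{thm:dev.bnd} to $\widetilde X_1, \ldots, \widetilde X_k$ (with $n\mapsto k$ and failure probability $\delta/2$, so that the resulting deviation is $\vartheta_k$ up to routine bookkeeping of the $\log$-constants inside $\delta$) to obtain an $\mathcal{F}_k$-measurable pilot $\widehat\mu_0$ with $|\widehat\mu_0 - \mu|\leq \vartheta_k$ on an event $E_1$ of probability at least $1 - \delta/2$. Next, define $Y_i := \widetilde X_i - \widehat\mu_0$ for $i > k$ and verify the required martingale structure for this shifted sequence. Since $\widehat\mu_0$ is $\mathcal{F}_k$-measurable and $\mathcal{F}_k \subseteq \mathcal{F}_{i-1}$, I get $\mathbb{E}[Y_i\mid\mathcal{F}_{i-1}] = \mu - \widehat\mu_0$ and, on $E_1$,
\[
\mathbb{E}\bigl[Y_i^2 \mid \mathcal{F}_{i-1}\bigr] = \mathrm{Var}(\widetilde X_i\mid\mathcal{F}_{i-1}) + (\mu - \widehat\mu_0)^2 \leq \mathcal{V} + \vartheta_k^2.
\]
The contamination constraint~\eqref{e:corrupt} restricted to the sub-window $\{k+1,\ldots,n\}$ still holds with the same $\eta$, provided $n - k \geq k_0$, which is ensured by taking $k$ sufficiently smaller than $n$.

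Now apply Theorem~\ref{thm:dev.bnd} a second time, conditionally on $\mathcal{F}_k$, to $Y_{k+1}, \ldots, Y_n$ with sample size $n - k$, conditional second-moment budget $\mathcal{V} + \vartheta_k^2$, the re-scaling $\alpha$ prescribed in the theorem statement, and failure probability $\delta/2$. This yields an estimate $\widehat\nu$ of $\mu - \widehat\mu_0$ with
\[
|\widehat\nu - (\mu - \widehat\mu_0)| \leq \sqrt{2(\mathcal{V} + \vartheta_k^2)\Bigl(\tfrac{\log(4/\delta)}{n-k} + 2A\eta\Bigr)}
\]
on an event $E_2$ whose conditional probability given $\mathcal{F}_k$ is at least $1 - \delta/2$. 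Setting $\widehat\mu_\eta := \widehat\mu_0 + \widehat\nu$, so that $\widehat\mu_\eta - \mu = \widehat\nu - (\mu - \widehat\mu_0)$, the triangle inequality and a union bound over $E_1$ and $E_2$ deliver~\eqref{eq:dev.bnd.cent} with probability at least $1 - \delta$.

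The main technical obstacle is the conditional application of Theorem~\ref{thm:dev.bnd} in the last step: since $\widehat\mu_0$ is a random function of $\mathcal{F}_k$, one must invoke Theorem~\ref{thm:dev.bnd} on a regular conditional version of $(\widetilde X_{k+1},\ldots,\widetilde X_n)\mid \mathcal{F}_k$ with the translated filtration $\{\mathcal{F}_{k+j}\}_{j\geq 0}$ taken as the base. This requires checking that the proof of Theorem~\ref{thm:dev.bnd} is uniform in the starting $\sigma$-algebra, which it is, because the underlying exponential-moment argument uses only the tower property and the martingale conditional-moment bounds, not any structure of $\mathcal{F}_0$. Once this conditional invocation is justified, what remains is purely algebraic: splitting $\delta$, absorbing the $\log 2$ factor into the stated expression for $\vartheta_k$, and combining the two deviation events.
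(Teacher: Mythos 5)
Your proposal is correct and follows essentially the same route as the paper's proof: apply Theorem~\ref{thm:dev.bnd} to the first $k$ samples to build the pilot shift, apply it again to the shifted remaining $n-k$ samples whose conditional second moment is bounded by $\mathcal{V}+\vartheta_k^2$, and combine the two $\delta/2$ events by a union bound. Your treatment is in fact somewhat more careful than the paper's, since you explicitly verify the $\mathcal{F}_k$-measurability of the pilot, note that the second-moment bound on the shifted data holds only on the pilot's success event, and justify the conditional re-invocation of Theorem~\ref{thm:dev.bnd} --- points the paper leaves implicit.
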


Theorem~\ref{thm:cnt.est} provides a deviation bound as a function of the conditional variance. When the contamination level~$\eta$ is negligible, a judicious choice of~$k$ will lessen the dependence on the raw moments, and the conditional variance in the deviation term. Theorem~\ref{thm:cnt.est} works to combat sensitivity to the distribution location. A procedure to obtain an estimator having the deviation bound as in~\eqref{eq:dev.bnd.cent} is given as follows:
\begin{itemize}
    \item[i.)] Shifting-Device: Let~$\{\widetilde X_i\}_{i=1}^k$ denote a sub-set of the collection. Compute a soft-truncated estimate using these~$k$ samples,
    \[
    \Bar{\mu}_{\eta}:= \frac{\Bar{\alpha}}{k} \sum_{i=1}^k \psi(\frac{\widetilde X_i}{\Bar{\alpha}}),
    \]
    where~$\Bar{\alpha}$ is informed by Theorem~\ref{thm:dev.bnd}.
    \item[ii.)] Shift the remaining~$n-k$ samples by~$\Bar{\mu}_{\eta}$, i.e,~$\widetilde X'_i = \widetilde X_i - \Bar{\mu}_{\eta}$, whence the conditional second moment of this data is now bounded by~$(\mathcal{V}+ \vartheta_k^2 )$. Computing the soft-truncated estimate of this data 
    \[
    \mu'_{\eta} := \frac{\alpha'}{n-k} \sum_{i=k+1}^n \psi(\frac{\widetilde X'_i}{\alpha'}),
    \]
    where~$\alpha'$ is informed by Theorem~\ref{thm:cnt.est}.
    \item[iii.)] Estimator~$\widehat \mu_{\eta} = \mu'_{\eta} +  \Bar{\mu}_{\eta}$ has the desired properties.
\end{itemize}

\section{Offline Change Detection}
In the rest of the paper, we assume that the contamination model used by the adversary is as in~\eqref{e:corrupt}. We first provide an algorithm based on the robust estimation techniques discussed in Section~\ref{sec:Mean.IF}, and later establish the theoretical properties of the algorithm. 

\subsection{The Proposed Algorithm}

Algorithm~\ref{alg:seq} is an offline robust change detection algorithm that can handle~$\eta$ fraction of weakly adversarial contamination when the data is not necessarily i.i.d. The methodology is inspired by~\cite{NZ12} and~\cite{LY21}, which handle the uncontaminated and weak contamination situations respectively.

\newpage

\begin{algorithm}[h]
\begin{algorithmic}[1]
\STATE \textbf{Input:} $\{\widetilde X\}_{i=1}^n$, $b (\text{threshold}) > 0,2w
(\text{window})>0$, $\eta \in (0,1)$, $\lambda \geq 1$
\STATE $\mathcal{L} \leftarrow \emptyset$, $\mathcal{G} \leftarrow \emptyset$
\FOR{$j \in \{w+1, \cdots, n - w\}$}
\STATE $S_{w}(j) \leftarrow \Big| \Psi\Big(\{\widetilde X_i\}_{i = j+1}^{j+w}\Big) - \Psi\Big(\{\widetilde X_i\}_{i = j - w}^{j-1}\Big) \Big|$ 
\ENDFOR
\FOR{$j \in \{\lambda w+1, \cdots, n - \lambda w\}$}
\IF{$j$ is a $\lambda w-$local maximizer of $S_{w}(j)$}
\STATE $\mathcal{L} \leftarrow \mathcal{L} \cup \{j\}$
\ENDIF
\ENDFOR
\FOR{$k \in \mathcal{L}$}
\IF{$S_w(k) > b$}
\STATE $\mathcal{G} \leftarrow \mathcal{G} \cup \{k\}$
\ENDIF
\ENDFOR
\STATE \textbf{Output:} $\mathcal{G}$
\end{algorithmic}
\caption{Robust Change Detection with Catoni (\texttt{RC-Cat})}
\label{alg:seq}
\end{algorithm}

Algorithm~\ref{alg:seq} is an intuitive solution that combines local and global search methods in a non-parametric manner to identify the change points. It works as follows: The dataset is scanned using the scan statistic~$S_{w}(\cdot)$, which is the absolute difference between the robust estimates of mean over specified length~$w$. Here the estimator over length~$w>0$,
\[
\Psi(\{\widetilde X_i\}_{i=1}^w):= \frac{\alpha}{w} \sum_{i=1}^w \psi\Big(\frac{\widetilde X_i}{\alpha}\Big),
\]
with one possible choice of~$\psi(\cdot)$ given by~\eqref{eq:psi.cat}. 
The nature of the (non-parametric) scan statistic, where normalized estimates of equal length of samples are compared, is well-studied in the literature. For example,~\cite{CWKX19} make use of similar ideas for empirical means of independent sub-gaussian distributions to detect changes in the mean in multi-armed bandit problems, while~\cite{NZ12} consider an application in bioinformatics. The robust scan statistic is closest to that in~\cite{LY21}, except with a few key differences: (i)~There is no sample splitting to estimate the location parameter using RUME~\citep{PBR20}. In~\cite{LY21}, the data over~$w/2$ is used to simply identify a high-confidence interval, and the remaining~$w/2$ portion is used to calculate the robust mean. This not only increases the variance of the estimator, but also may hide/ remove change points depending on which of~$w/2$ points is selected. This affects the detection delay and hence the consistency. (ii)~The worst case computational complexity of \texttt{RC-Cat} is~$O(nw)$, whereas the worst case complexity in case of~\cite{LY21} is~$O(n^2 w \log(w))$. Here the~$w \log(w)$ is from ranking the data to find the shortest interval involved in RUME. Note that the state-of-the-art methods such as penalized bi-weight loss methods have a computational complexity of~$O(n^3)$~\citep[Corollary 2]{FR19}.  

The~$\lambda w-$local maximizer is inspired by~\cite{NZ12} and also appears in~\cite{LY21}. $S_w(j)$ is a $h-$local maximizer if~$S_w(j) \geq S_w(k)$ for all~$k \in (j-h,j+h)$, and is motivated by two key ideas: (i)~It helps to avoid overestimating the number of change points. (ii)~It helps to localize change points with a high probability. 

\subsection{The Analysis}
\texttt{RC-Cat} is a computationally appealing solution to offline change detection. In this section, we establish that it is consistent as well, i.e, as the number of data samples increase, the regime changes are identified within a prescribed margin with a high probability. We need to make a few standard assumptions to enable this result. 

Let~$K$ be the number of true change points located at~$\tau_k,~k \in \{1,\cdots, K\}$ with~$\tau_0=0$ and~$\tau_{K+1} = n$. Let $\widehat K$ be the number of detected change points announced at~$\widehat \tau_k,~k \in \{1,\cdots,\widehat K\}$ by \texttt{RC-Cat}. Let the collections be denoted as~$\mathcal{K}:= \{\tau_1,\cdots,\tau_{K}\}$ and $\widehat{\mathcal{K}}:= \{\widehat \tau_1, \widehat \tau_2,\cdots, \widehat \tau_{\widehat K}\}$ respectively. Let the minimal spacing be denoted as $\delta = \min_{k \in \mathcal{K}}|\tau_{k+1} - \tau_{k}|$ and the jump size be denoted as $\theta = \min_{k \in \mathcal{K}} |\mu_{\tau_{k}+1} - \mu_{\tau_k}|$.

\begin{itemize}
    \item[\textbf{A1.}] The conditional expectation in~\textbf{C1} is constant between change points, that is,  \\ for $k \in \{1,2,\cdots,K+1\}$,
    \[
    \mu_t = \mu_{\tau_{k-1}},~~\forall~{\tau_{k-1}} \leq t < \tau_k.
    \]
    \item[\textbf{A2.}] The minimal spacing~$\delta > \lambda w$ for some~$\lambda \geq 2$.
    \item[\textbf{A3.}] The jump size~$\theta > \sqrt{3 b}$, where~$b$ is the threshold.
\end{itemize}
\textbf{A2} essentially says that the process has slow changes and \textbf{A3} is related to detectability. The assumptions \textbf{A2} and \textbf{A3} are intuitive and standard in the change detection literature~\citep{NZ12,CWKX19,Yu20,LY21}, while \textbf{A1} simplifies exposition. While these conditions are necessary for characterizing the theoretical properties, deviations from these assumptions do not drastically affect the empirical performance. Also, we should mention that, \textbf{A1} can be relaxed to allow small perturbations between change points for the conditional expectation, and the same analysis carries over. 

\begin{theorem} \label{thm:const.alg}
Let~$\{\widetilde X_i\}_{i \in [n]}$ be the collection of r.vs input to \texttt{RC-Cat}. Let the threshold~$b = 2c_0 \sqrt{\mathcal{M} \eta}$
and window ~$w \geq c_1 \log(n) / \eta$. Under assumptions~\textbf{A1} - \textbf{A3}, it holds that
\begin{equation} \label{eq:chg.pt}
    \mathbb{P}\Big(\widehat K = K, \max_{k=1}^{\widehat K} |\widehat \tau_k - \tau_k| \leq w \Big) \geq 1 - n^{-c_1}.
\end{equation}
\end{theorem}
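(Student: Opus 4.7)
The plan is to reduce the theorem to a uniform concentration statement for the robust estimator $\Psi$ over all sliding windows of length $w$, and then classify points $j$ according to their distance from the true change-point set $\mathcal{K}$. First, note that the choice $w \geq c_1 \log(n)/\eta$ dominates $k_0 = \Omega(\log n)$, so by the contamination model~\eqref{e:corrupt} every window of length $w$ has corruption fraction at most $\eta$. Applying Corollary~\ref{cor:bnd.eta} to each of the at most $2n$ windows of the form $[j-w,j-1]$ and $[j+1,j+w]$, and union-bounding, gives that with probability at least $1 - 4n \exp(-A\eta w/B) \geq 1 - n^{-c_1}$ (for $c_1$ chosen in terms of $A/B$), every such window $I$ satisfies
\begin{equation*}
  \Big| \Psi\bigl(\{\widetilde X_i\}_{i\in I}\bigr) - \tfrac{1}{|I|}\textstyle\sum_{i\in I} \mu_i \Big| \;\leq\; c_0 \sqrt{\mathcal{M}\eta} \;=\; b/2.
\end{equation*}
The rest of the argument is purely deterministic on this good event.

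Second, I would control $S_w(j)$ for $j$ that is \emph{far} from every $\tau_k$, meaning $|j - \tau_k| > w$ for all $k \in \{1,\ldots,K\}$. Since $\delta > \lambda w \geq 2w$ by \textbf{A2}, both $[j-w,j-1]$ and $[j+1,j+w]$ lie entirely inside a single regime, so by \textbf{A1} their average conditional means coincide. The triangle inequality then yields $S_w(j) \leq b$, so no such $j$ can belong to the output set $\mathcal{G}$. This immediately implies that every reported $\widehat \tau_k$ sits within distance $w$ of some true change point, which is the key ingredient for the localization half of~\eqref{eq:chg.pt}.

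Third, at each true change point $\tau_k$, the two windows $[\tau_k - w, \tau_k - 1]$ and $[\tau_k + 1, \tau_k + w]$ are again single-regime by \textbf{A2}, but now their underlying means differ by at least $\theta$. Hence $S_w(\tau_k) \geq \theta - b$, and \textbf{A3} together with the smallness of $b$ implies $S_w(\tau_k) > b$. More generally, for $j$ with $|j - \tau_k| = m \leq w$ the left window straddles the change in a proportion $m/w$, giving an expected value of roughly $(1 - m/w)\theta$, so the expected scan statistic peaks at $j = \tau_k$ and decays (piecewise) linearly over a band of half-width $w$. To make this a bona-fide $\lambda w$-local maximizer, I would take the argmax of $S_w$ over $[\tau_k - w, \tau_k + w]$; call it $j_k^\star$. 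Any competing $j'$ inside $(j_k^\star - \lambda w, j_k^\star + \lambda w)$ that does not belong to the band of some other true change point falls under the "far" regime of the previous paragraph and so has $S_w(j') \leq b < S_w(j_k^\star)$; a competing $j'$ near some $\tau_{k'} \neq \tau_k$ is excluded because $\delta > \lambda w$ forces $|\tau_{k'} - \tau_k| > \lambda w$, hence $j'$ lies outside the interval $(j_k^\star - \lambda w, j_k^\star + \lambda w)$. This shows each true change point contributes exactly one $\lambda w$-local maximizer above threshold, and combined with the "no false positive" argument it gives $\widehat K = K$ with $|\widehat \tau_k - \tau_k| \leq w$ on the good event.

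The main obstacle is Step three: the scan statistic near $\tau_k$ is only approximately a triangular hat — it is perturbed by up to $b$ on each window and may oscillate — so proving that its $\lambda w$-local maximizer really does lie in $[\tau_k - w, \tau_k + w]$ requires a careful comparison of $S_w$ inside and just outside this band. This is precisely where \textbf{A3} earns its keep: the signal $\theta - b$ at $j = \tau_k$ must comfortably dominate the noise level $b$ seen at points $w$ away, and the choice of the constant in A3 (together with $\lambda \geq 2$, which ensures the local-max window covers the entire elevated band) is what forces the maximizer into the right neighborhood and prevents spurious or missed detections.
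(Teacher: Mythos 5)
Your proposal follows essentially the same route as the paper's proof: a union bound over all length-$w$ windows using Corollary~\ref{cor:bnd.eta} to build a good event, the observation that $S_w < b$ at every point more than $w$ from any $\tau_k$ (the paper's Lemma~\ref{lem:fir}), the bound $S_w(\tau_k) > b$ via \textbf{A3} (Lemma~\ref{lem:sec}), and the deterministic local-maximizer argument confining each detection to $[\tau_k - w, \tau_k + w]$ (Lemma~\ref{lem:fin}). The only cosmetic difference is that you lower-bound $S_w(\tau_k)$ by $\theta - b$ via the triangle inequality where the paper uses $(x+y)^2 \geq x^2/2 - y^2$; both steps reduce to \textbf{A3}, and your closing concern about the band around $\tau_k$ is handled in the paper exactly as you suggest, by noting that $\Omega_\pm$ lie in $T$ so the maximizer of $S_w$ over the neighborhood must fall inside $(\tau_k - w, \tau_k + w)$.
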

Theorem~\ref{thm:const.alg} shows that for large dataset, \texttt{RC-Cat} identifies the change points or the segments to within specified tolerance with a high probability. In other words, as~$n \uparrow \infty$, we have that 
 \[
    \mathbb{P}\Big(\widehat K = K, \max_{k=1}^{\widehat K} |\widehat \tau_k - \tau_k| \leq w \Big) \rightarrow 1.
 \]
Due to the nature of the robust estimator used in the scan statistic, \texttt{RC-Cat} can handle data from heterogeneous sources and non-i.i.d as specified by~$\textbf{C1}$ and $\textbf{C2}$.

\begin{corollary} \label{cor:no.fa}
Let~$\mathcal{K} = \emptyset$. Under the same assumptions as in Theorem~\ref{thm:const.alg}, there is a constant~$c_1>0$ such that
\[
\mathbb{P}(\widehat K = 0) \geq 1 - n^{-c_1}.
\]
\end{corollary}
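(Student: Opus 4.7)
The plan is to exploit the fact that under $\mathcal{K} = \emptyset$ the problem reduces to a pure ``no-signal'' concentration argument: assumption \textbf{A1} forces $\mu_t \equiv \mu$ for all $t \in [n]$, so the two half-windows in the definition of the scan statistic are estimating the \emph{same} target mean, and we only need to rule out any spurious crossing of the threshold $b$. No jump-size or localization reasoning is required.

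First I would fix an arbitrary candidate index $j \in \{w+1,\ldots,n-w\}$ and apply Corollary~\ref{cor:bnd.eta} separately to each of the two robust Catoni estimators $\Psi(\{\widetilde X_i\}_{i=j-w}^{j-1})$ and $\Psi(\{\widetilde X_i\}_{i=j+1}^{j+w})$, each regarded as an estimator of the common mean $\mu$ from a length-$w$ contaminated sample. This yields, with probability at least $1 - 4\exp(-A\eta w/B)$, that both estimators differ from $\mu$ by at most $c_0\sqrt{\mathcal{M}\eta}$. By the triangle inequality,
\begin{equation*}
S_w(j) \;\leq\; 2 c_0 \sqrt{\mathcal{M}\eta} \;=\; b,
\end{equation*}
so $j$ fails the threshold check in line~12 of Algorithm~\ref{alg:seq} and is never placed in $\mathcal{G}$.

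Next I would union bound this event over the at most $n$ candidate centers $j$. The overall failure probability is then at most $4n \exp(-A\eta w/B)$, and under the hypothesis $w \geq c_1 \log(n)/\eta$ this is at most $4 n^{1-c_1 A/B}$, which is $\leq n^{-c_1}$ once $c_1$ is chosen sufficiently large relative to $A$ and $B$. This is the same constant-choice rationale already absorbed into the universal $c_1$ announced in Theorem~\ref{thm:const.alg}, so the same $c_1$ can be used here. On the complementary high-probability event, every $j$ satisfies $S_w(j) \leq b$; hence the gating loop in Algorithm~\ref{alg:seq} rejects every element of $\mathcal{L}$, the returned set $\mathcal{G}$ is empty, and therefore $\widehat K = 0$.

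The main (minor) obstacle is simply bookkeeping the preconditions of Corollary~\ref{cor:bnd.eta} uniformly across all $n$ windows. Conditions \textbf{C1} and \textbf{C2} restrict unchanged to any sub-sequence since they are stated for the full filtration, and the per-window corruption rate remains at most $\eta$ by~\eqref{e:corrupt} provided $w \geq k_0 = \Omega(\log n)$, which is automatically supplied by the hypothesis $w \geq c_1\log(n)/\eta$. Because there is no change-point signal to separate, the argument is strictly a specialization of the ``$\widehat K = K$'' branch of Theorem~\ref{thm:const.alg} to the case $K=0$, and is consequently a one-shot concentration plus union-bound computation rather than the full multi-segment analysis.
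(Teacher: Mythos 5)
Your proposal is correct and takes essentially the same route as the paper: the paper's proof likewise notes that with $\mathcal{K}=\emptyset$ every candidate index lies in $T$, applies the concentration of both half-window Catoni estimators about the common mean (packaged there as Lemma~\ref{lem:fir}, which is exactly your application of Corollary~\ref{cor:bnd.eta} plus the triangle inequality) to get $S_w(x)<b$, and union bounds over the at most $n$ windows with $\delta = n^{-(c_1+1)}$. Your additional bookkeeping of the per-window corruption rate and window-length preconditions is consistent with what the paper leaves implicit.
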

Corollary~\ref{cor:no.fa} says that when~$\mathcal{K} = \emptyset$, the proposed algorithm \texttt{RC-Cat} is still consistent, and provides an upper bound on the false detection probability.

\section{Proofs of Main Results}

\vspace{-2mm}

In this section, we provide the proofs of the main results. The proof of Theorem~\ref{thm:dev.bnd} builds on the standard martingale analysis~\citep{Fre75,SLC12} to establish the bounds for bounded functions of real-valued random variables. The key idea is to make use of the fact that the influence function~$\psi(\cdot)$ is bounded by logarithmic functions, and to construct a supermartingale as a function of~$\psi(\cdot)$. The result then follows using Markov's inequality. The proof of Theorem~\ref{thm:const.alg} closely follows~\cite{NZ12}, but under weaker assumptions on the data and the parameters. Also, in comparison with~\cite{LY21}, for a fixed confidence~$\delta$, \texttt{RC-Cat} achieves consistency even over smaller sized datasets.

\subsection{Proof of Theorem~\ref{thm:dev.bnd}}
Before establishing the result, we will first characterize the high-probability deviation bound for the robust estimator in the absence of contamination as a function of~$\alpha$. This is given as Lemma~\ref{lem:dev.emp}.

\begin{lemma} \label{lem:dev.emp}
Let the set of r.vs~$\{X_t\}_{t \in [n]}$ satisfy~\textbf{C1} and \textbf{C2}. For~$\alpha >0$ and $\delta \in (0,1)$, the estimator~\eqref{eq:emp.mean} satisfies with probability at least~$1 - \delta$
\begin{equation*} 
    |\widehat \mu - \mu| \leq \frac{\mathcal{M}}{2\alpha} + \frac{\alpha \log(2/\delta)}{n}.
\end{equation*}
\end{lemma}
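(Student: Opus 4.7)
The plan is to adapt Catoni's classical PAC-style argument into a martingale framework, exploiting the fact that the conditional moment assumptions \textbf{C1}, \textbf{C2} are exactly what is needed for the truncation inequality to go through one step at a time.

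First I would exponentiate the Catoni envelope. Since $\psi(x)\le \log(1+x+x^2/2)$, applying this with $x=X_i/\alpha$ gives $\exp\bigl(\psi(X_i/\alpha)\bigr)\le 1 + X_i/\alpha + X_i^2/(2\alpha^2)$. Taking conditional expectation with respect to $\mathcal{F}_{i-1}$ and invoking \textbf{C1}, \textbf{C2} yields
\begin{equation*}
\mathbb{E}\Bigl[\exp\bigl(\psi(X_i/\alpha)\bigr)\,\Big|\,\mathcal{F}_{i-1}\Bigr] \;\le\; 1 + \frac{\mu_i}{\alpha} + \frac{\mathcal{M}}{2\alpha^2} \;\le\; \exp\!\Bigl(\frac{\mu_i}{\alpha} + \frac{\mathcal{M}}{2\alpha^2}\Bigr),
\end{equation*}
where the last step uses $1+u\le e^u$. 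Equivalently, defining $Z_i := \psi(X_i/\alpha) - \mu_i/\alpha - \mathcal{M}/(2\alpha^2)$, we have $\mathbb{E}[\exp(Z_i)\mid\mathcal{F}_{i-1}]\le 1$.

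Next I would build the supermartingale $M_n := \prod_{i=1}^n \exp(Z_i) = \exp\bigl(\sum_{i=1}^n Z_i\bigr)$. The display above shows $\mathbb{E}[M_n\mid\mathcal{F}_{n-1}]\le M_{n-1}$, so by iteration $\mathbb{E}[M_n]\le 1$. Markov's inequality then gives $\mathbb{P}(M_n\ge 2/\delta)\le \delta/2$, i.e. with probability at least $1-\delta/2$,
\begin{equation*}
\sum_{i=1}^n \psi(X_i/\alpha) \;\le\; \sum_{i=1}^n \frac{\mu_i}{\alpha} + \frac{n\mathcal{M}}{2\alpha^2} + \log(2/\delta).
\end{equation*}
Multiplying by $\alpha/n$ and recalling $\mu=\frac{1}{n}\sum \mu_i$ and the definition of $\widehat\mu$ in~\eqref{eq:emp.mean} gives the upper tail $\widehat\mu - \mu \le \frac{\mathcal{M}}{2\alpha}+\frac{\alpha\log(2/\delta)}{n}$.

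For the matching lower tail I would run the same argument with $-\psi$, now using the lower Catoni envelope $-\psi(x)\le \log(1-x+x^2/2)$, which yields $\mathbb{E}[\exp(-\psi(X_i/\alpha))\mid \mathcal{F}_{i-1}]\le 1 - \mu_i/\alpha + \mathcal{M}/(2\alpha^2) \le \exp(-\mu_i/\alpha + \mathcal{M}/(2\alpha^2))$. The corresponding supermartingale argument produces $\mu - \widehat\mu \le \frac{\mathcal{M}}{2\alpha}+\frac{\alpha\log(2/\delta)}{n}$ with probability at least $1-\delta/2$. A union bound combines the two tails to give the claimed two-sided deviation at confidence $1-\delta$. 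There is no real obstacle here once the supermartingale is identified; the only subtlety is keeping the conditioning clean so that $\mathcal{M}$ (an almost-sure bound on the conditional second moment) can be pulled out of the conditional expectation at each step rather than via a crude global bound.
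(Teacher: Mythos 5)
Your proposal is correct and follows essentially the same route as the paper's proof: the same exponentiation of the Catoni envelope via \textbf{C1}--\textbf{C2} and $1+u\le e^u$, the same multiplicative supermartingale (your $M_n$ is the paper's $Y_n$), and the same Markov-plus-union-bound conclusion for the two tails. No gaps.
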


\begin{proof}
For any~$t \leq n$, we have the following using the upper bound on the influence function~$\psi(\cdot)$,
\begin{align*}
    \mathbb{E}\Big[\exp{\Big(\psi\Big(\frac{X_t}{\alpha}\Big)\Big)} \Big| \mathcal{F}_{t-1}\Big] &\leq \mathbb{E}\Big[ 1 + \frac{X_t}{\alpha} + \frac{X_t^2}{2\alpha^2} \Big| \mathcal{F}_{t-1} \Big], \\
    & \leq 1 + \frac{\mu_t}{\alpha} + \frac{1}{2 \alpha^2} \mathbb{E}\Big[X^2_t \Big| \mathcal{F}_{t-1} \Big].
\end{align*}
Using the fact that~$1+x \leq e^x$ for all~$x \in \mathbb{R}$, we have using~$\textbf{C2}$
\begin{align} \label{eq:up.bd.exp}
    \mathbb{E}\Big[\exp{\Big(\psi\Big(\frac{X_t}{\alpha}\Big)\Big)} \Big| \mathcal{F}_{t-1}\Big] &\leq \exp{\Big(\frac{\mu_t}{\alpha} + \frac{\mathcal{M}}{2 \alpha^2}  \Big)}.
\end{align}
Construct a sequence of random variables~$Y_t$ as follows:~$Y_0 = 1$ and for~$t \geq 1$, 
\begin{align*}
    Y_t = Y_{t-1} \exp{\Big(\psi\Big(\frac{X_t}{\alpha}\Big)\Big)} \exp{\Big(-\Big(\frac{\mu_t}{\alpha} + \frac{\mathcal{M}}{2 \alpha^2} \Big) \Big)}.
\end{align*}
Clearly,~$\mathbb{E}\Big[ Y_t \Big| \mathcal{F}_{t-1}\Big] \leq Y_{t-1}$ as~$Y_{t-1}$ is~$\mathcal{F}_{t-1}$ measurable and~\eqref{eq:up.bd.exp} holds. We have that the unconditional expectation
\[
\mathbb{E}[Y_n] \leq \mathbb{E}[Y_1] \leq \cdots \leq \mathbb{E}[Y_0]  = 1.
\]
Recursively,~$Y_n$ is expressed as
\begin{align*}
    Y_n &= \exp{\Big( \sum_{t=1}^n \psi\Big(\frac{X_t}{\alpha}\Big)\Big)} \exp{\Big(-\Big(\frac{n \mu}{\alpha} + \frac{n \mathcal{M}}{2 \alpha^2} \Big) \Big)},\\
    &= \exp{\Big( \frac{n \widehat \mu}{\alpha} \Big)} \exp{\Big(-\Big(\frac{n \mu}{\alpha} + \frac{n \mathcal{M}}{2 \alpha^2} \Big) \Big)}.
\end{align*}
Here~$\widehat \mu$ is given as in~\eqref{eq:emp.mean} and $\mu = \frac{1}{n} \sum_{t=1}^n \mu_t$. By Markov's inequality, we have that
\begin{align*}
    \mathbb{P}(Y_n \geq 2/\delta) \leq \frac{\delta \mathbb{E}[Y_n]}{2} \leq \frac{\delta}{2}.
\end{align*}
In other words, we have that
\begin{align*}
     \mathbb{P}\Big( \frac{n \widehat \mu}{\alpha} \geq \frac{n \mu}{\alpha} + \frac{n \mathcal{M}}{2 \alpha^2} + \log(2/\delta) \Big) \leq \frac{\delta}{2}.
\end{align*}
Dividing by~$\frac{n}{\alpha}$ gives the deviation in one direction. Using the lower bound on the influence function, we have that
\begin{align*} 
    \mathbb{E}\Big[\exp{\Big(-\psi\Big(\frac{X_t}{\alpha}\Big)\Big)} \Big| \mathcal{F}_{t-1}\Big] &\leq \exp{\Big(-\frac{\mu_t}{\alpha} + \frac{\mathcal{M}}{2 \alpha^2}  \Big)}.
\end{align*}
Analogous arguments establish the deviation of the estimator~$\widehat \mu$ in the other direction, whence
\begin{align*}
     \mathbb{P}\Big( \frac{n \mu}{\alpha} - \frac{n \widehat \mu}{\alpha} \geq   \frac{n \mathcal{M}}{2 \alpha^2} + \log(2/\delta) \Big) \leq \frac{\delta}{2}.
\end{align*}
The result follows.
\end{proof}

From~\eqref{e:bound}, we have the following relation
\begin{equation} \label{eq:cont.dev}
    |\widehat \mu_{\eta} - \mu| \leq |\widehat \mu - \mu| + 2A\eta \alpha.
\end{equation} 
From~\eqref{eq:dev.bnd} and~\eqref{eq:cont.dev}, we have with probability at least~$1-\delta$,
\[
|\widehat \mu_{\eta} - \mu| \leq \frac{\mathcal{M}}{2\alpha} + \frac{\alpha \log(2/\delta)}{n} + 2A\eta \alpha.
\]
The main result holds by setting~$\alpha$. 

\subsection{Proof of Corollary~\ref{cor:bnd.eta}}
From Theorem~\ref{thm:dev.bnd}, we have with probability at least~$1-\delta$
\[
|\widehat \mu_{\eta} - \mu| \leq \sqrt{2 \mathcal{M} \Big( \frac{\log(2/\delta)}{n} + 2 A \eta \Big)}.
\]
By choosing~$n \geq \frac{B}{A\eta} \log(2/\delta)$, we have the result. 

\subsection{Proof of Theorem~\ref{thm:const.alg}}
The proof is established using the following reasoning. Let~$T:= \{x: |x-\tau_k| > w,~\forall~k \in \mathcal{K} \}$ denote the set of all points that are at least~$w-$away from the true change points. 
Consider the following events,
\begin{align*}
    \mathcal{E}_1(x) &= \{S_{w}(x) < b \}, \\
    \mathcal{E}_2(y) &= \{S_{w}(y) > b \}, \\
    \mathcal{E}_n &= \Big(\cap_{k=1}^K \mathcal{E}_2(\tau_k) \Big) \cap \Big(\cap_{x \in T} \mathcal{E}_1(x) \Big).
\end{align*}
Here~$\mathcal{E}_1(x)$ captures the events that false detection was not raised,~$ \mathcal{E}_2(y)$ captures all the events when the algorithm raised a detection, and~$\mathcal{E}_n$ captures the event that detection was raised only around the region where the true changes occurred. The result holds if we establish two relations
\begin{align*}
    \text{On event}~\mathcal{E}_n,&~\text{we have} \\
    &~\widehat K = K~~\&~\max_{k \in \widehat{\mathcal{K}}}|\widehat \tau_k - \tau_k| \leq w,~\text{and}\\
    &\mathbb{P}(\mathcal{E}^c_n) \rightarrow 0.
\end{align*}
We begin by characterizing the probability of each event as separate results to highlight the assumptions required, and the main result follows from Lemmas~\ref{lem:fir}-\ref{lem:fin}.

\begin{lemma} \label{lem:fir}
Let~$\{\widetilde X_i\}_{i \in [n]}$ be a collection that is input to \texttt{RC-Cat}. Let the threshold~$b = 2c_0 \sqrt{\mathcal{M} \eta}$. For~$x \in T$, we have under assumption~\textbf{A1}
\begin{align*}
    \mathbb{P}(S_{w}(x) < b) \geq 1 - \delta.
\end{align*}
\end{lemma}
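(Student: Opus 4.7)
The plan is to exploit the fact that when $x \in T$, both length-$w$ windows that enter the scan statistic $S_w(x)$ lie entirely inside a single segment on which, by Assumption~\textbf{A1}, the conditional mean $\mu_t$ is constant. Concretely, let $\tau_{k-1} < x \leq \tau_k$ be the segment containing $x$; because $|x-\tau_{k-1}|>w$ and $|x-\tau_{k}|>w$, the sets $\{x-w,\ldots,x-1\}$ and $\{x+1,\ldots,x+w\}$ are contained in $\{\tau_{k-1}+1,\ldots,\tau_k\}$, whence $\mathbb{E}[X_t\mid\mathcal{F}_{t-1}] = \mu_{\tau_{k-1}} =: \mu$ for every $t$ in either window. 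Thus $\Psi(\cdot)$ evaluated on either window is a Catoni-type estimator of the \emph{same} scalar $\mu$, and I can reduce the problem to a two-sided deviation bound via the triangle inequality
\[
S_w(x) \;\leq\; \bigl|\Psi(\{\widetilde X_i\}_{i=x+1}^{x+w}) - \mu\bigr| + \bigl|\Psi(\{\widetilde X_i\}_{i=x-w}^{x-1}) - \mu\bigr|.
\]

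Next, I would invoke Corollary~\ref{cor:bnd.eta} on each of the two windows separately, treating each window as an instance of the setup of Section~\ref{sec:Mean.IF} with $n$ replaced by $w$. The contamination condition~\eqref{e:corrupt} applied to either length-$w$ sub-sequence gives at most $\eta w$ corrupted points inside it, which is precisely what Corollary~\ref{cor:bnd.eta} requires; this uses only that $w \geq k_0 = \Omega(\log n)$, which is guaranteed by $w \geq c_1\log(n)/\eta$. Each application then yields $|\Psi(\cdot)-\mu| \leq c_0\sqrt{\mathcal{M}\eta}$ with probability at least $1-2\exp(-A\eta w/B)$. Summing the two bounds via triangle inequality produces $S_w(x) \leq 2c_0\sqrt{\mathcal{M}\eta} = b$, and a union bound over the two windows gives a failure probability of at most $4\exp(-A\eta w/B)$. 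Finally, using $w \geq c_1 \log(n)/\eta$, this failure probability is at most $4n^{-Ac_1/B}$, which may be identified with $\delta$ by an appropriate choice of the constant $c_1$ (or equivalently $B$).

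The main obstacle I anticipate is book-keeping the conditional / filtration structure: Corollary~\ref{cor:bnd.eta} is stated for a collection satisfying \textbf{C1}-\textbf{C2}, and I need to verify that restricting to the consecutive sub-sequence indexed by $\{x+1,\ldots,x+w\}$ (and likewise $\{x-w,\ldots,x-1\}$) preserves those hypotheses with the same common conditional mean $\mu$. This is immediate because \textbf{C1}-\textbf{C2} are stated pointwise with respect to the natural filtration, and \textbf{A1} forces all conditional means inside the segment to agree; the left window uses only past data while the right window is forward-looking, but both are valid instances of the Catoni estimator once one fixes the relevant sub-filtration. The remaining work — matching the Corollary~\ref{cor:bnd.eta} deviation probability $2\exp(-A\eta w/B)$ with the target $\delta$ via the window choice $w \geq c_1 \log(n)/\eta$ — is routine and determines the relationship between $\delta$ and $c_1$ used in Theorem~\ref{thm:const.alg}.
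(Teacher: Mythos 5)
Your proposal is correct and follows essentially the same route as the paper: split $S_w(x)$ by the triangle inequality around the common segment mean, apply Corollary~\ref{cor:bnd.eta} to each length-$w$ window (which is valid precisely because the window length satisfies $w \geq \frac{B}{A\eta}\log(4/\delta)$, making each window's failure probability at most $\delta/2$), and union bound. Your extra bookkeeping on the sub-sequence contamination budget and the filtration is a faithful elaboration of what the paper leaves implicit, not a different argument.
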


\begin{proof}
As~$x \in T$, by definition there is no change point in the interval~$[x-w, x+w]$. Consider the random variables~$\{\widetilde X_i\}_{i = x-w}^{x+w}$. Let~$\mu_x$ denote the mean of the segment. Let~$\Psi(\{\widetilde X_i\}_{i=x-w}^{x-1})$ and~$\Psi(\{\widetilde X_i\}_{i=x+1}^{x+w})$ define the scan statistic in \texttt{RC-Cat}. By Corollary~\ref{cor:bnd.eta}, we have using~$w \geq \frac{B}{A\eta} \log(4/\delta)$, 
\begin{align*}
    |\Psi(\{\widetilde X_i\}_{i=x-w}^{x-1}) - \mu_x| &\leq c_0 \sqrt{\mathcal{M} \eta}, \\
    |\Psi(\{\widetilde X_i\}_{i=x+1}^{x+w}) - \mu_x| &\leq c_0 \sqrt{\mathcal{M} \eta},
\end{align*}
each with probability at least~$1-\delta/2$. Therefore, the event~$\mathcal{E}_1(x)$ occurs with probability at least~$1-\delta$. Indeed, by triangle inequality
\begin{align*}
S_w(x) = |\Psi(\{\widetilde X_i\}_{i=x-w}^{x-1}) - \Psi(\{\widetilde X_i\}_{i=x+1}^{x+w})| &\leq 2c_0 \sqrt{\mathcal{M} \eta}.
\end{align*}
The result holds.
\end{proof}

\begin{lemma} \label{lem:sec}
Let~$\{\widetilde X_i\}_{i \in [n]}$ be a collection that is input to \texttt{RC-Cat}. Let the threshold~$b = 2c_0 \sqrt{\mathcal{M} \eta}$. Let assumptions~\textbf{A1} - \textbf{A3} hold. For~$y \in \mathcal{K}$, we have 
\begin{align*}
    \mathbb{P}(S_{w}(y) > b) \geq 1 - \delta.
\end{align*}
\end{lemma}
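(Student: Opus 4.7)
\textbf{Proof plan for Lemma~\ref{lem:sec}.} The plan is to mirror the structure of Lemma~\ref{lem:fir}, but now exploit that a jump is present at $y = \tau_k$, so that the two half-windows carry genuinely different means. First I would set up the segmentation: since $y = \tau_k \in \mathcal{K}$ and assumption \textbf{A2} gives $\delta > \lambda w \geq 2w$, the window $[y-w, y-1]$ lies entirely within the previous segment $[\tau_{k-1}, \tau_k - 1]$ and the window $[y+1, y+w]$ lies entirely within the segment $[\tau_k, \tau_{k+1}-1]$. By assumption \textbf{A1}, the conditional means on these two windows are constant and equal to $\mu_{\tau_{k-1}}$ and $\mu_{\tau_k}$ respectively, and by definition of the jump size, $|\mu_{\tau_k} - \mu_{\tau_{k-1}}| \geq \theta$.

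Next I would invoke Corollary~\ref{cor:bnd.eta} separately on each half-window. Since $w \geq c_1 \log(n)/\eta$ is chosen large enough (in particular $w \geq \tfrac{B}{A\eta}\log(4/\delta)$), the corollary yields, each with probability at least $1-\delta/2$,
\begin{align*}
\bigl|\Psi(\{\widetilde X_i\}_{i=y-w}^{y-1}) - \mu_{\tau_{k-1}}\bigr| &\leq c_0\sqrt{\mathcal{M}\eta}, \\
\bigl|\Psi(\{\widetilde X_i\}_{i=y+1}^{y+w}) - \mu_{\tau_k}\bigr| &\leq c_0\sqrt{\mathcal{M}\eta}.
\end{align*}
A union bound gives both simultaneously with probability at least $1-\delta$.

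Then the reverse triangle inequality applied to $S_w(y) = \bigl|\Psi(\{\widetilde X_i\}_{i=y+1}^{y+w}) - \Psi(\{\widetilde X_i\}_{i=y-w}^{y-1})\bigr|$ yields
\begin{equation*}
S_w(y) \;\geq\; |\mu_{\tau_k} - \mu_{\tau_{k-1}}| - 2c_0\sqrt{\mathcal{M}\eta} \;\geq\; \theta - 2c_0\sqrt{\mathcal{M}\eta} \;=\; \theta - b,
\end{equation*}
by the choice $b = 2c_0\sqrt{\mathcal{M}\eta}$. To conclude $S_w(y) > b$, I would invoke assumption \textbf{A3}, which bounds $\theta$ away from $b$ by a margin sufficient to ensure $\theta - b > b$; this is the only place the detectability assumption is used.

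The main obstacle, and really the only subtle step, is checking that the two half-windows truly sit on opposite sides of the change in both the uncorrupted and the contaminated collections: the guarantee of Corollary~\ref{cor:bnd.eta} requires the conditional second-moment bound \textbf{C2} on each sub-sequence, and one must verify that the adversarial corruption budget $\eta w$ permitted on each window of length $w \geq k_0 = \Omega(\log n)$ is consistent with the contamination model~\eqref{e:corrupt}. This is immediate from $w \geq k_0$ and the supremum formulation of~\eqref{e:corrupt}, so the argument closes without further machinery.
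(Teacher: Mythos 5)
Your proposal is correct and follows essentially the same route as the paper: split the window at $y=\tau_k$ using \textbf{A1}--\textbf{A2}, apply Corollary~\ref{cor:bnd.eta} to each half-window with confidence $1-\delta/2$, union-bound, and combine with the jump size via \textbf{A3}. The only cosmetic difference is that you apply the reverse triangle inequality directly to $S_w(y)$ where the paper squares everything via $(x+y)^2 \geq x^2/2 - y^2$; both reductions ultimately need $\theta$ on the order of $2b$ to conclude $S_w(y)>b$, so your closing step inherits exactly the same constant-level looseness in \textbf{A3} that the paper's own proof has.
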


\begin{proof}
For any~$k$, consider~$y = \tau_k$. By assumption~\textbf{A1} and \textbf{A2}, we have that the segment~$\{\widetilde X_i\}_{i= \tau_k+1}^{\tau_k+w}$ has mean~$\mu_{\tau_{k}}$ and the segment~$\{\widetilde X_i\}_{i= \tau_k - w}^{\tau_k-1}$ has mean~$\mu_{\tau_{k-1}}$. For simplicity, we abuse the notation and denote~$\Psi(\{\widetilde X_i\}_{i= \tau_k+1}^{\tau_k+w}):= \Psi_1$, and~$\Psi(\{\widetilde X_i\}_{i= \tau_k - w}^{\tau_k-1}): = \Psi_2$.  We have using the inequality $(x+y)^2 \geq x^2/2 - y^2$ for any~$x,y \in \mathbb{R}$,
{{\begin{align*}
    &\Big|\Big(\Psi_1 - \mu_{\tau_{k}}\Big) - \Big(\Psi_2 - \mu_{\tau_{k-1}}\Big)  + (\mu_{\tau_{k}} - \mu_{\tau_{k-1}} )\Big|^2 \\
    & \geq \theta^2/2 - \Big|\Big(\Psi_1 - \mu_{\tau_{k}}\Big) - \Big(\Psi_2 - \mu_{\tau_{k-1}}\Big)\Big|^2.
\end{align*}}}
By Corollary~\ref{cor:bnd.eta}, we have using~$w \geq \frac{B}{A\eta} \log(4/\delta)$, 
\begin{align*}
\Big|\Big(\Psi_1 - \mu_{\tau_{k}}\Big)\Big|^2 &\leq b^2/4,\\
\Big|\Big(\Psi_2 - \mu_{\tau_{k-1}}\Big)\Big|^2 &\leq b^2/4,
\end{align*}
each with probability at least~$1-\delta/2$. The result follows using assumption~\textbf{A3}.
\end{proof}

\newpage

\begin{lemma} \label{lem:fin}
Let the threshold~$b = 2c_0 \sqrt{\mathcal{M} \eta}$. Let assumptions~\textbf{A1} - \textbf{A3} hold. On event~$\mathcal{E}_n$, we have
\[
\{\widehat K = K\}~~\&~~ \max_{k \in \widehat{\mathcal{K}}}|\widehat \tau_k - \tau_k| \leq w.
\]
Moreover, $\mathbb{P}(\mathcal{E}^c_n) \rightarrow 0$.
\end{lemma}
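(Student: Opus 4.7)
I would decompose the statement into a deterministic implication and a probability bound, and handle them separately. The deterministic implication uses only the definitions of the algorithm, of $\mathcal{E}_n$, and the spacing assumption~\textbf{A2}; the probability bound is a union bound over Lemmas~\ref{lem:fir} and~\ref{lem:sec}.

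Deterministic part: On $\mathcal{E}_n$, every $\widehat\tau \in \mathcal{G}$ satisfies $S_w(\widehat\tau) > b$ by construction, while the event $\cap_{x \in T}\mathcal{E}_1(x)$ forces $S_w(x) < b$ for all $x$ at distance more than $w$ from $\mathcal{K}$; hence $\widehat\tau \notin T$, giving the localization bound $\min_k|\widehat\tau - \tau_k| \leq w$. Conversely, for each $\tau_k$ I would exhibit a matching declared point by letting $j_k^\star \in \arg\max_{|j - \tau_k| \leq w} S_w(j)$. On $\mathcal{E}_n$, $S_w(j_k^\star) \geq S_w(\tau_k) > b$. To certify $j_k^\star$ as a $\lambda w$-local maximizer, I would partition any $x$ with $|x - j_k^\star| < \lambda w$ into three cases: $x$ within $w$ of $\tau_k$ (dominated by construction of $j_k^\star$), $x \in T$ (where $S_w(x) < b \leq S_w(j_k^\star)$), or $x$ within $w$ of some other $\tau_{k'}$ (controlled via~\textbf{A2}). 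The same argument then rules out two declared points clustering near a single $\tau_k$, yielding $\widehat K = K$ together with $\max_k |\widehat\tau_k - \tau_k| \leq w$.

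Probability part: I would union-bound over the at most $K \leq n$ events of type $\mathcal{E}_2(\tau_k)^c$ and the at most $n$ events of type $\mathcal{E}_1(x)^c$ for $x \in T$. Lemmas~\ref{lem:fir} and~\ref{lem:sec} imply each has probability at most $\delta$ provided $w \geq \frac{B}{A\eta}\log(4/\delta)$. Choosing $\delta = n^{-(c_1+1)}$ gives the advertised scaling $w \geq c_1 \log(n)/\eta$ and $\mathbb{P}(\mathcal{E}_n^c) \leq 2n\delta = n^{-c_1}$, which vanishes as $n \uparrow \infty$.

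Main obstacle: the $\lambda w$-local-maximizer verification when two true change points sit near the minimum spacing $\lambda w$. In that regime the $\lambda w$-neighborhood of $j_k^\star$ can intersect the $w$-neighborhood of $\tau_{k \pm 1}$, where $S_w$ also exceeds $b$. Disentangling this requires noting that on $\mathcal{E}_n$ the value of $S_w$ in the overlap is controlled by the corresponding maximizer $j_{k\pm 1}^\star$, and then using~\textbf{A2} with $\lambda \geq 2$ to argue that $j_k^\star$ and $j_{k\pm 1}^\star$ both survive as distinct elements of $\mathcal{L}$ and hence of $\mathcal{G}$. Once this geometric step is in place, the remainder is bookkeeping.
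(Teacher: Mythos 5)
Your proposal follows essentially the same route as the paper: on $\mathcal{E}_n$ no declared point can lie in $T$ (giving the localization bound), each $\tau_k$ yields a $\lambda w$-local maximizer exceeding $b$ inside $(\tau_k - w, \tau_k + w)$ via the argmax over that window, and a union bound over the events of Lemmas~\ref{lem:fir} and~\ref{lem:sec} with $\delta = n^{-(c_1+1)}$ gives $\mathbb{P}(\mathcal{E}^c_n) \leq 2n\delta \rightarrow 0$. The one point where you are more careful than the paper is the local-maximizer certification: the paper simply asserts that $\Omega_{\pm} = (\tau_k \pm w, \tau_k \pm (\lambda+1)w)$ are contained in $T$ ``by definition and \textbf{A2},'' which in fact requires $\delta > (\lambda+2)w$ rather than the assumed $\delta > \lambda w$ --- precisely the near-minimal-spacing overlap with the $w$-neighborhood of $\tau_{k\pm 1}$ that you flag as the main obstacle and sketch a fix for.
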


\begin{proof}
First, note that~$\widehat \tau_k \in T^c,~\forall~k \in \widehat{\mathcal{K}}$ by definition of~$T$, where~$\widehat \tau_k$ are the change points detected by \texttt{RC-Cat}. Therefore, we have that
\[
\tau_k \in [\widehat \tau_k - w, \widehat \tau_k + w].
\]
By assumption \textbf{A2}, we have that there are no other change points in this interval. 

Next, we show that there is a change point identified in the interval~$(\tau_k -w, \tau_k + w)$. Let~$\lambda$
The intervals
\[
\Omega_+: = (\tau_k+w,\tau_k+(\lambda+1)w)~~ \&~~ \Omega_-:= (\tau_k-w,\tau_k-(\lambda+1)w)
\]
are contained in~$T$ by definition and \textbf{A2}. This implies that on every~$x \in \Omega_+ \cup \Omega_-$, the event~$\mathcal{E}_1(x)$ holds with the corresponding scan statistic~$S_w(x)<b$. However, by Lemma~\ref{lem:sec} we have $S_w(\tau_k) > b$. This implies that there is a local maximum, say~$\widehat \tau_k \in (\tau_k -w, \tau_k + w)$, and $S_{w}(\widehat \tau_k) \geq S_w(\tau_k) > b$.

Using union bound, we have
\begin{align*}
    \mathbb{P}(\mathcal{E}^c_n) \leq \sum_{k \in \mathcal{K}} \mathbb{P}(\mathcal{E}^c_2(\tau_k)) + \sum_{x \in T} \mathbb{P}(\mathcal{E}^c_1(x)).
\end{align*}
From Lemma~\ref{lem:fir} and Lemma~\ref{lem:sec}, we have the trivial upper bound
\[
\mathbb{P}(\mathcal{E}^c_n) \leq 2n\delta.
\]
The result holds by choosing~$\delta = \frac{1}{n^{c_1+1}}$ for~$c_1>0$.
\end{proof}

\subsection{Proof of Corollary~\ref{cor:no.fa}}
Let~$[m]:= \{t: w+1 \leq t \leq n-w \}$. As~$\mathcal{K} = \emptyset$, we have that the event
\[
\mathcal{E}_n = \cap_{x \in [m]} \mathcal{E}_1(x)
\]
has a probability lower bound using Lemma~\ref{lem:fir} as
\[
\mathbb{P}(\mathcal{E}_n) \geq 1 - n\delta,
\]
where~$\delta = \frac{1}{n^{c_1+1}}$ for~$c_1>0$.

\subsection{Proof of Theorem~\ref{thm:cnt.est}}
The soft-truncation of~$k$ samples as~$\Bar{\mu}_{\eta}$
obtains from Theorem~\ref{thm:dev.bnd}, the deviation bound
\[
|\Bar{\mu}_{\eta} - \mu| \leq \vartheta_k := \sqrt{2 \mathcal{M} \Big( \frac{\log(4/\delta)}{k} + 2 A \eta \Big)},
\]
with probability at least~$1-\delta/2$. For the shifted data~$\widetilde X'_i = \widetilde X_i - \Bar{\mu}_{\eta}$ note that~$\mathbb{E}[\widetilde X'^2|\mathcal{F}] \leq (\mathcal{V}+ \vartheta_k^2 )$. So the soft-truncation estimate of shifted data obtains from Theorem~\ref{thm:dev.bnd}, the  deviation bound
\[
|\mu'_{\eta} - (\mu-\Bar{\mu}_{\eta})| \leq \sqrt{2 (\mathcal{V}+ \vartheta_k^2 )\Big( \frac{\log(4/\delta)}{n-k} + 2 A \eta \Big)},
\]
with probability at least~$1-\delta/2$. Defining~$\mu'_{\eta}:= \widehat \mu_{\eta} -  \Bar{\mu}_{\eta}$, the result follows with probability~$1-\delta$ as both high-probability events should hold.

\begin{figure}[t!]
	\centering
	\mbox{
		\includegraphics[width=2.5in]{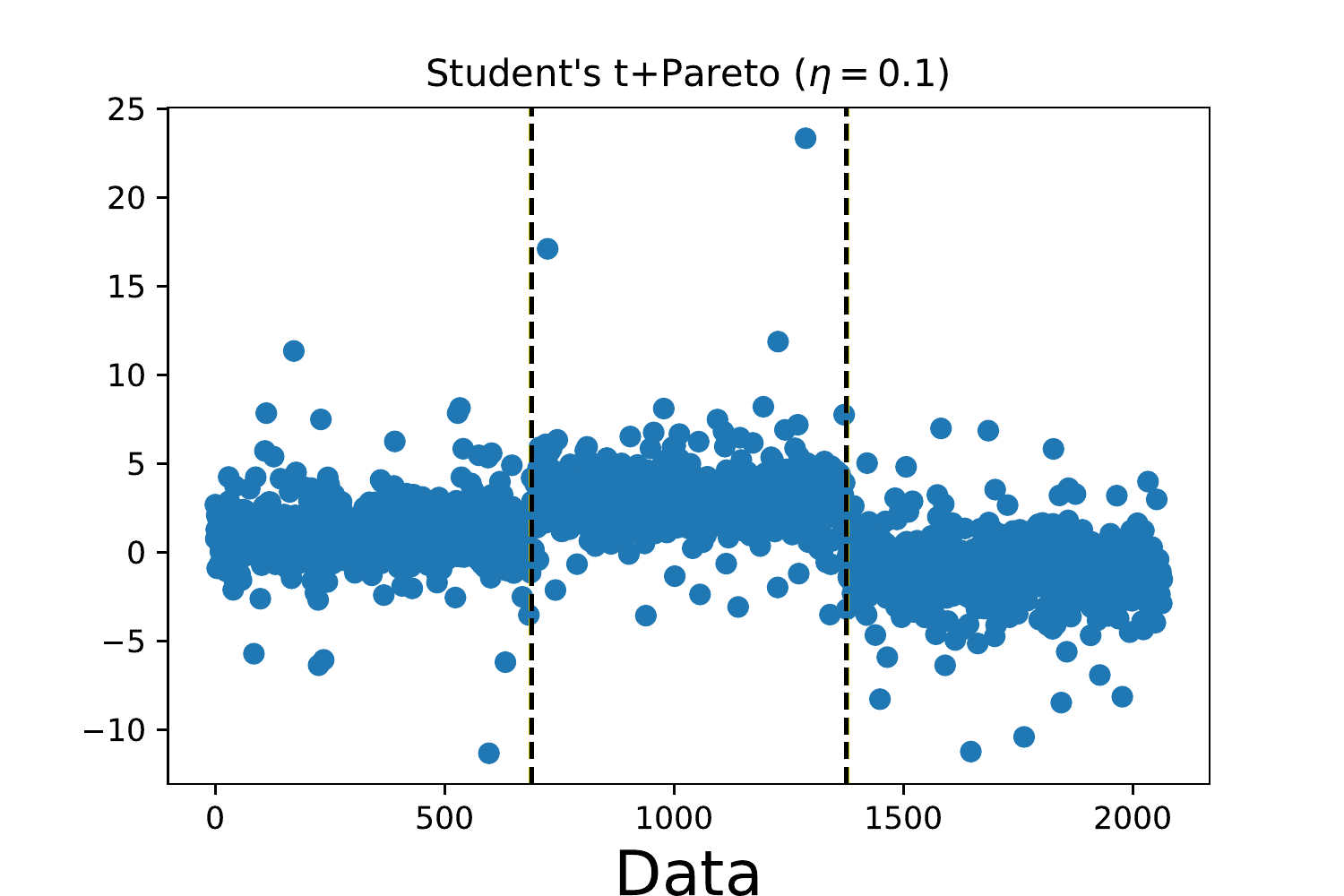}\hspace{0.2in}
		\includegraphics[width=2.5in]{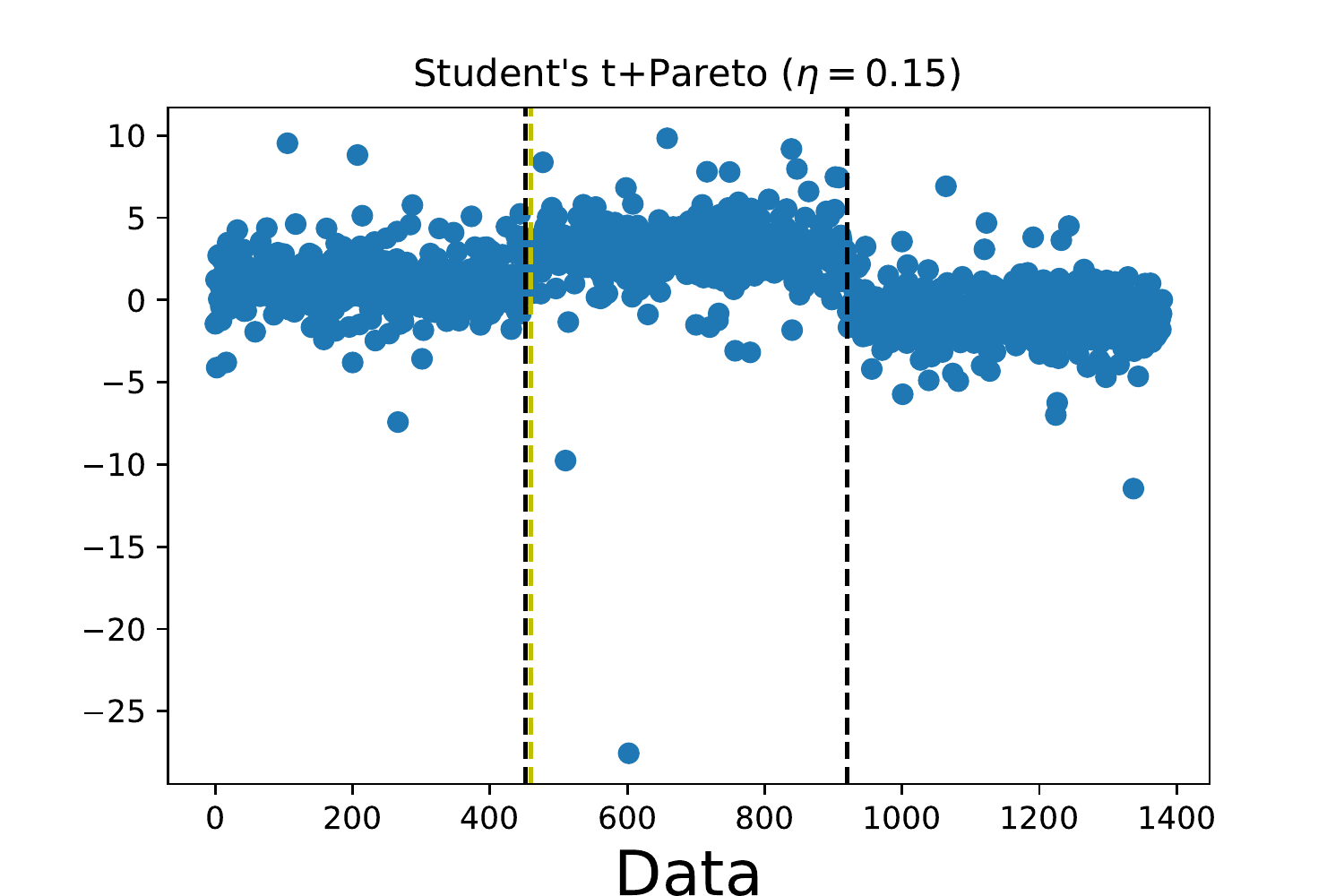} 
	}
	
	\mbox{
		\includegraphics[width=2.5in]{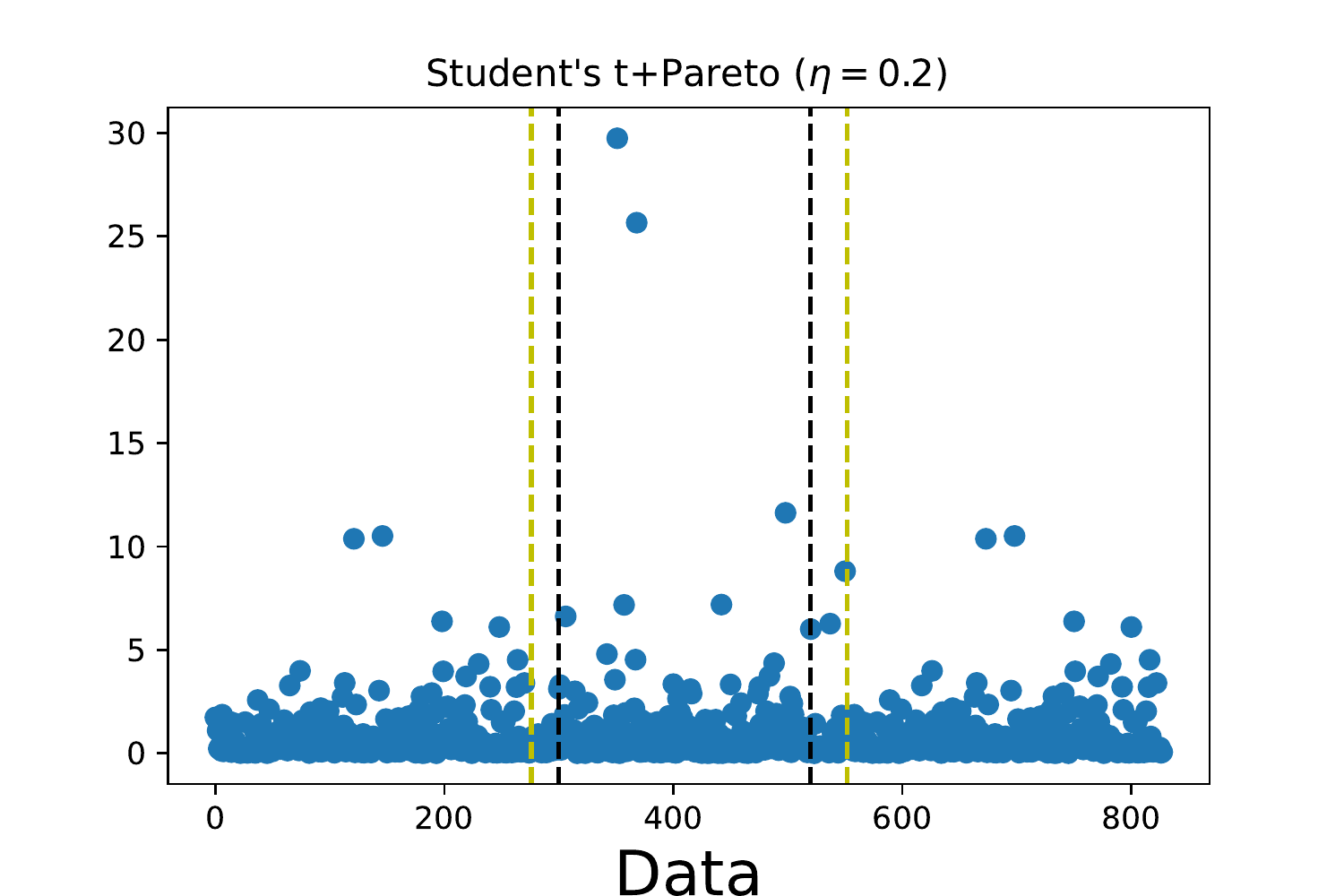}\hspace{0.2in}
		\includegraphics[width=2.5in]{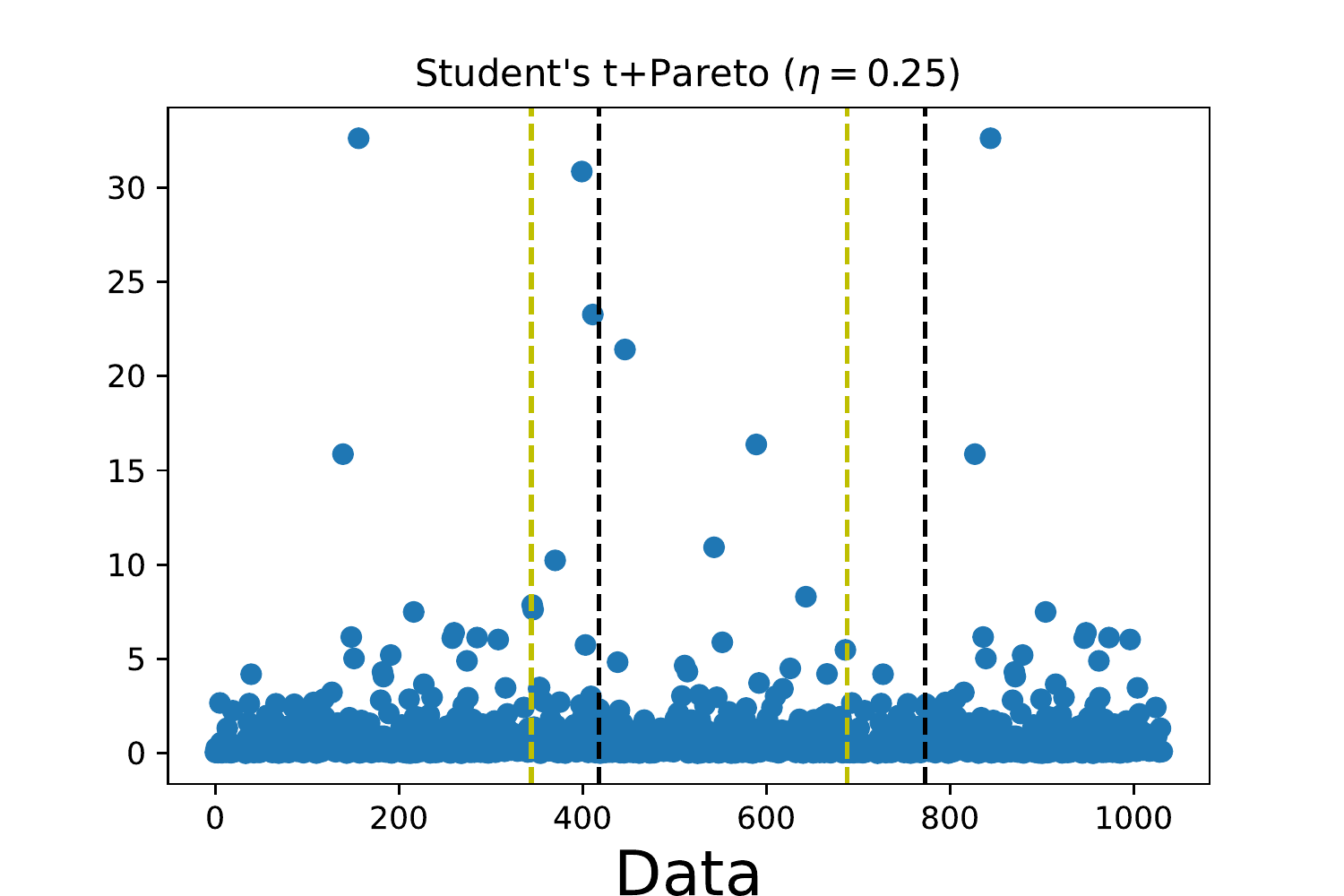} 
	}
	
	\mbox{
		\includegraphics[width=2.5in]{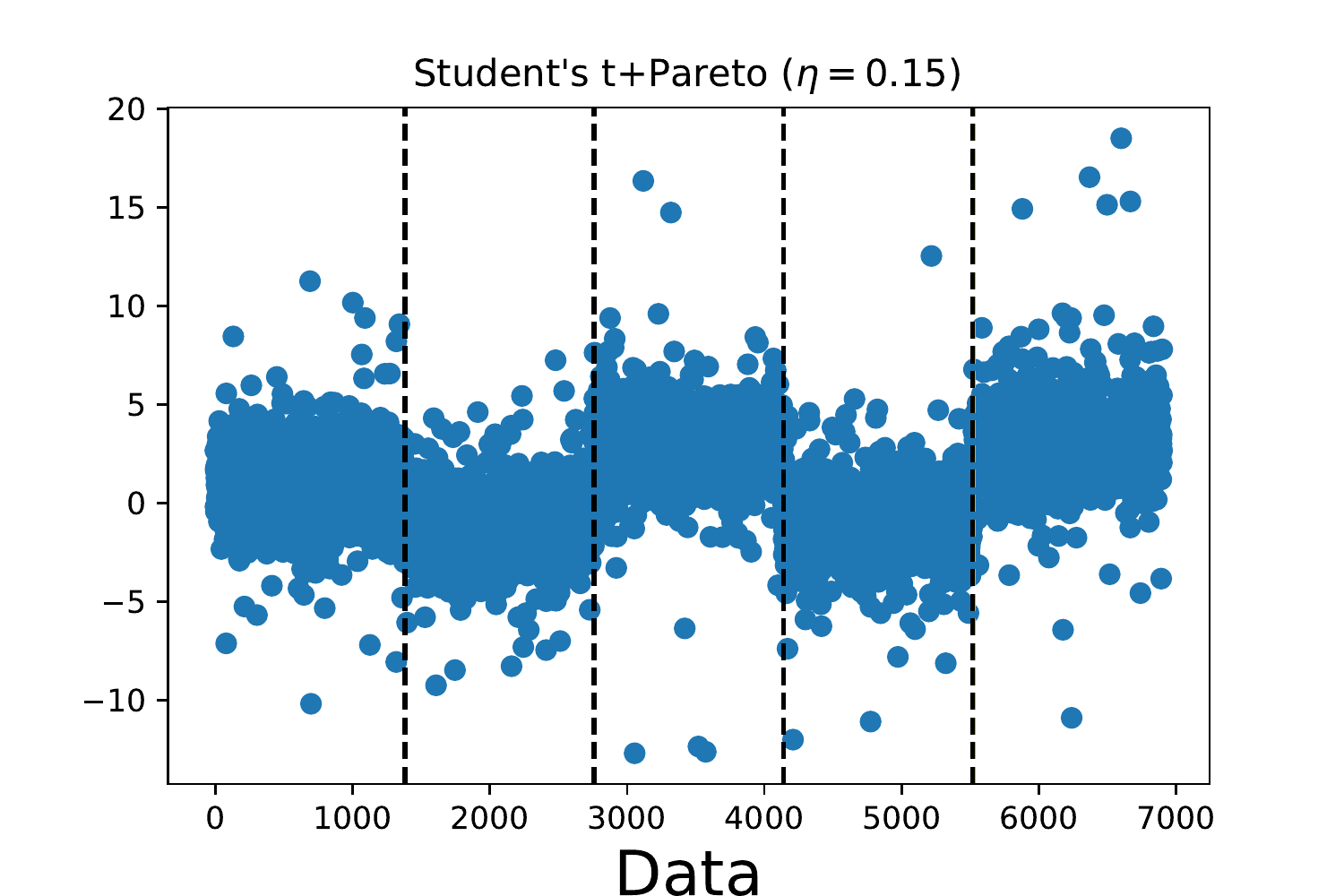}\hspace{0.2in}
		\includegraphics[width=2.5in]{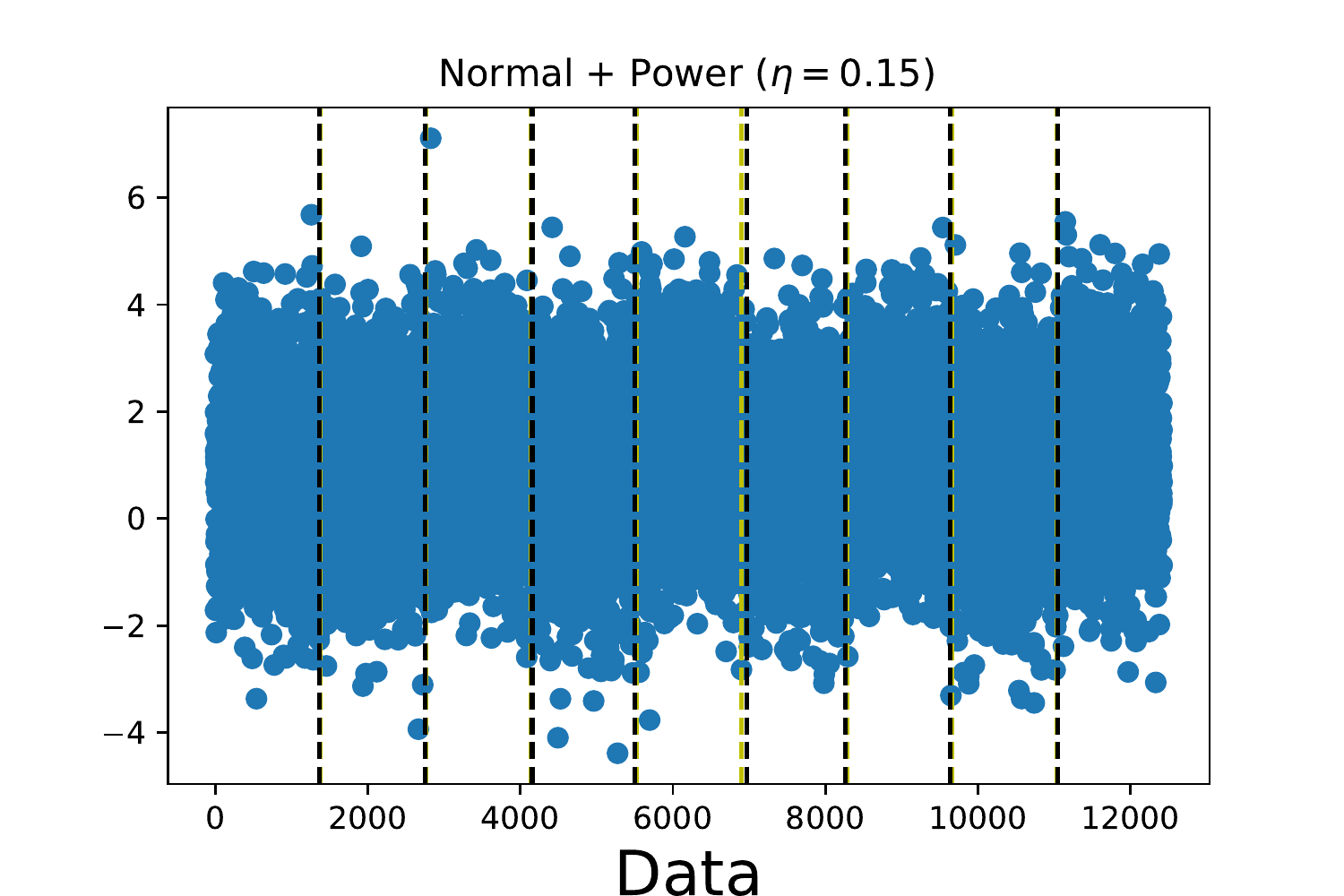} 
	}
\vspace{-0.1in}

	\caption{The figures show the performance of Algorithm~\ref{alg:seq} under different distributions for the outliers. The \textit{yellow} line indicates the positions at which the distribution of the inliers changed, and the~\textit{black} dotted line shows the positions at which changes were announced by the algorithm. It can be seen that while the contamination level~$\eta$ influences the offset -- increases with increase in~$\eta$ -- it has limited bearing on the number of change points detected under a careful choice of the algorithm parameters.}\label{fig:ch_det}	
	
\end{figure}

\section{Numerical Results}
In this section, we provide numerical results to illustrate the performance of Algorithm~\ref{alg:seq}. Our main objective is to provide empirical evidence to support the consistency results. 

\subsection{Synthetic Data}
We assume that the adversary/ nature replaces the original data with samples generated from random distributions. The algorithm parameters for all the figures in Figure~\ref{fig:ch_det} are chosen as follows. The inlier distributions are modeled as
\[
X_t = \mu_t + \zeta_t,
\]
where~$\zeta_t$ is a martingale difference noise. From Theorem~\ref{thm:cnt.est}, the choice of~$w$ is given as~$w \geq \frac{B}{A \eta} \log(4/\delta)$ for a confidence level~$\delta = 0.01$, $B=2$, and $A = \log 2$. The value~$\mu_t \leq 3,~\forall~t$, obtaining a bound~$\mathcal{M} = 10$ for unit variance. There is a trade-off between false detection and no-detection for different choices of~$b$ informed by Theorem~\ref{thm:dev.bnd}. For good performance, we recommend setting smaller than that informed by theory and increasing the neighbourhood width~$\lambda$ for local search and elimination. For Figure~\ref{fig:ch_det},~$\lambda = 3$ was chosen and~$\alpha$ is informed by Theorem~\ref{thm:dev.bnd}.

\subsection{Comparison with ARC method}

In this section, we compare our method with a recent state-of-art method, the ARC algorithm~\citep{LY21}.
Specifically, we examine the robustness of proposed method and ARC under three different contamination settings. 

(Setting 1) \textit{Pareto contamination}.  The inliers follow student-t distribution with degree of freedom 3. Outliers follow pareto distribution with degree freedom 2. 

(Setting 2) \textit{One-sided arbitrary contamination}. The inliers follow student-t distribution with degree of freedom 3. Outliers are fixed at 100. 

(Setting 3) \textit{Two-sided arbitrary contamination}. The inliers follow student-t distribution with degree of freedom 3. Outliers are fixed at 100 or -100.

The total time horizon $T$ is fixed at $1500$, the confidence level $\delta = 0.01$, $A = \log(2), B= 2, \mathcal M = 5$, the true mean is in the range, $-3 \leq \mu_t \leq 3$ and \textit{two} underlying change points equally spaced between~$[0,T]$.
We consider varying the following tuning parameter. The contamination rate $\eta \in \{5\%, 10\%, 20\%, 30\%, 40\%\}$.
Window size $w \in \{80, 100, 120\}$.
Average differences (i.e., average of $|\hat \tau_k - \tau_k|$'s) between detected time and true change points are reported. To be fair (without deliberately tuning threshold $b$), for both methods, the detected change points are chosen to be time stamps with top two $S_w(k)$ values.

\begin{table}[b!]
	\centering
	\begin{tabular}{ c|c|c|c|c|c} 
		\hline
		\hline
		\multicolumn{6}{c}{Setting 1} \\
		\hline 
		\hline
		\multicolumn{6}{c}{$w = 80$} \\
		\hline
		$\eta$ & 0.05 & 0.1& 0.2 & 0.3 & 0.4 \\ 
		\hline 
		Ours  & 6.5 & 13.3 & 31.3 & 43.3 & 55.9  \\
		\hline
		ARC   & 18.5 & 26.1 & 40.9 & 44.2 & 58.6  \\
		\hline 
		\hline
		\multicolumn{6}{c}{$w = 100$} \\
		\hline
		$\eta$ & 0.05 & 0.1& 0.2 & 0.3 & 0.4 \\ 
		\hline 
		Ours  & 3.6 & 6.6 & 14.8 & 26.4 & 31.2  \\
		\hline
		ARC   & 19.8 & 20.2 & 22.6 & 32.7 & 36.9  \\
		\hline 
		\hline
		\multicolumn{6}{c}{$w = 120$} \\
		\hline
		$\eta$ & 0.05 & 0.1& 0.2 & 0.3 & 0.4 \\ 
		\hline 
		Ours  & 2.9 & 3.4 & 7.0 & 7.7 & 17.7  \\
		\hline
		ARC   & 17.7 & 18.0 & 19.9 & 23.5 & 25.6  \\
		\hline 
		\hline
	\end{tabular}
	\caption{The table of detection error under Setting 1 with various choices of tuning parameters $\eta$ and $w$. Each case is replicated for 500 times.} 
	\label{tab1}
\end{table}

\begin{table}[h!]
	\centering
	\small
	\begin{tabular}{ c|c|c|c|c|c} 
		\hline
		\hline
		\multicolumn{6}{c}{Setting 2} \\
		\hline
		\hline 
		\multicolumn{6}{c}{$w = 80$} \\
		\hline 
		$\eta$ & 0.05 & 0.1& 0.2 & 0.3 & 0.4 \\ 
		\hline 
		Ours  & 2.0 & 3.5 & 9.6 & 25.8 & 43.5  \\
		\hline
		ARC   & 15.3 & 14.4 & 19.7 & 134.6 & 46.2  \\
		\hline 
		\hline
		\multicolumn{6}{c}{$w = 100$} \\
		\hline
		$\eta$ & 0.05 & 0.1& 0.2 & 0.3 & 0.4 \\ 
		\hline 
		Ours  & 2.0 & 2.9 & 13.4 & 21.3 & 44.9  \\
		\hline
		ARC   & 13.8 & 14.5 & 34.5 & 119.4 & 98.7  \\
		\hline 
		\hline
		\multicolumn{6}{c}{$w = 120$} \\
		\hline
		$\eta$ & 0.05 & 0.1& 0.2 & 0.3 & 0.4 \\ 
		\hline 
		Ours  & 2.1 & 2.5 & 11.8 & 25.6 & 44.0  \\
		\hline
		ARC   & 16.0 & 14.5 & 20.4 & 81.0 & 84.7  \\
		\hline 
		\hline
	\end{tabular}
		\caption{The table of detection error under Setting 2 with various choices of tuning parameters $\eta$ and $w$.} 
	\label{tab2}\vspace{0.2in}
\end{table}

\begin{table}[h!]
	\centering
	\small
	\begin{tabular}{ c|c|c|c|c|c} 
		\hline
		\hline
		\multicolumn{6}{c}{Setting 3} \\
		\hline
		\hline
		\multicolumn{6}{c}{$w = 80$} \\
		\hline
		& 0.05 & 0.1& 0.2 & 0.3 & 0.4 \\ 
		\hline 
		Ours  & 2.6 & 3.9 & 10.6 & 25.5 & 36.5  \\
		\hline
		ARC   & 14.8 & 14.3 & 13.9 & 81.8 & 130.7  \\
		\hline 
		\hline
		\multicolumn{6}{c}{$w = 100$} \\
		\hline
		& 0.05 & 0.1& 0.2 & 0.3 & 0.4 \\ 
		\hline 
		Ours  & 2.9 & 3.7 & 10.0 & 21.0 & 34.1  \\
		\hline
		ARC   & 15.0 & 13.7 & 18.8 & 97.1 & 96.1 \\
		\hline 
		\hline
		\multicolumn{6}{c}{$w = 120$} \\
		\hline
		& 0.05 & 0.1& 0.2 & 0.3 & 0.4 \\ 
		\hline 
		Ours  & 2.6 & 3.6 & 9.1 & 15.2 & 32.2  \\
		\hline
		ARC   & 15.7 & 14.7 & 14.0 & 70.6 & 62.2  \\
		\hline 
		\hline
	\end{tabular}
		\caption{The table of detection error under Setting 3 with various choices of tuning parameters $\eta$ and $w$.} 
	\label{tab3}
\end{table}

Based on the Tables~\ref{tab1} -~\ref{tab3}, we can find that the proposed method is robust to different contamination level, while ARC method is not. Especially when we increase contamination rate $\eta$ to 40 \%, ARC behaves much worse. 
Moreover, our method is also less sensitive to the choices of window size than ARC method.
These results indicate that~\texttt{RC-Cat} is indeed a better method.

\subsection{Real-world Data}

We consider two real data sets in this subsection, the \textit{well-log} data~\citep{well, FR19, LY21} which has been widely studied in the existing literature and \textit{PM2.5 index} data~\citep{PM25} which has not been considered in the literature.  

\textbf{Well-log} data set contains 4050 measurements of nuclear magnetic response during the drilling of a well. Majority of the observations behave very well and a small proportion of the observations are far away from the mean value.

\textbf{PM2.5 index} data set records air quality of Hong Kong during 1-Jan 2014 to 2-Feb-2022. The PM2.5 index fluctuates occasionally over the total period of time. 

\newpage

From Figure~\ref{fig:realdata}, we can see that the proposed method \texttt{RC-Cat} can well detect the jump points in well-log data and is very robust to those outliers. 
Our method can also capture the fluctuations of Hong Kong PM2.5 index.

\begin{figure}[t]
\centering
	\mbox{
		\includegraphics[width=3in]{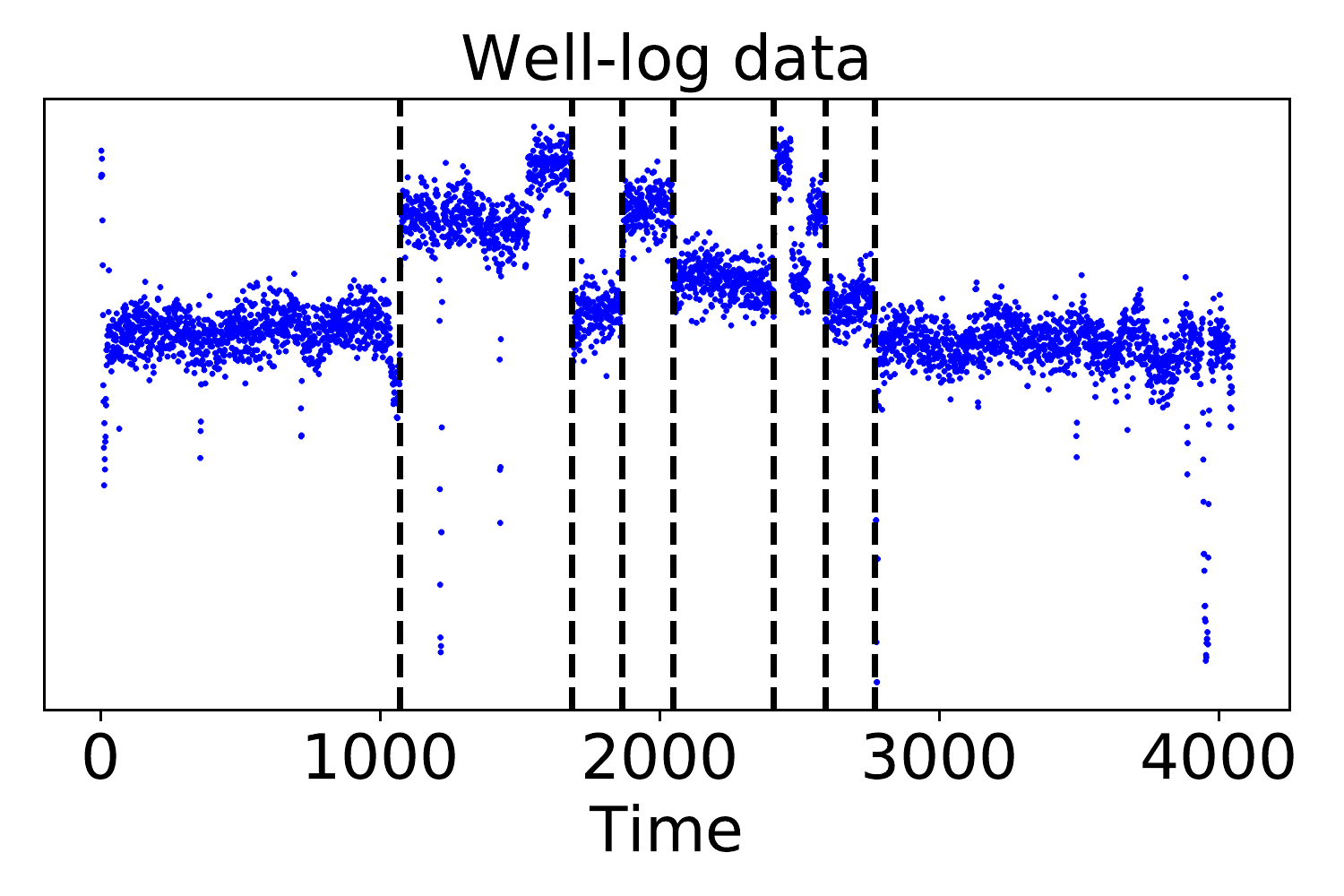}\hspace{0.2in}
		\includegraphics[width=3in]{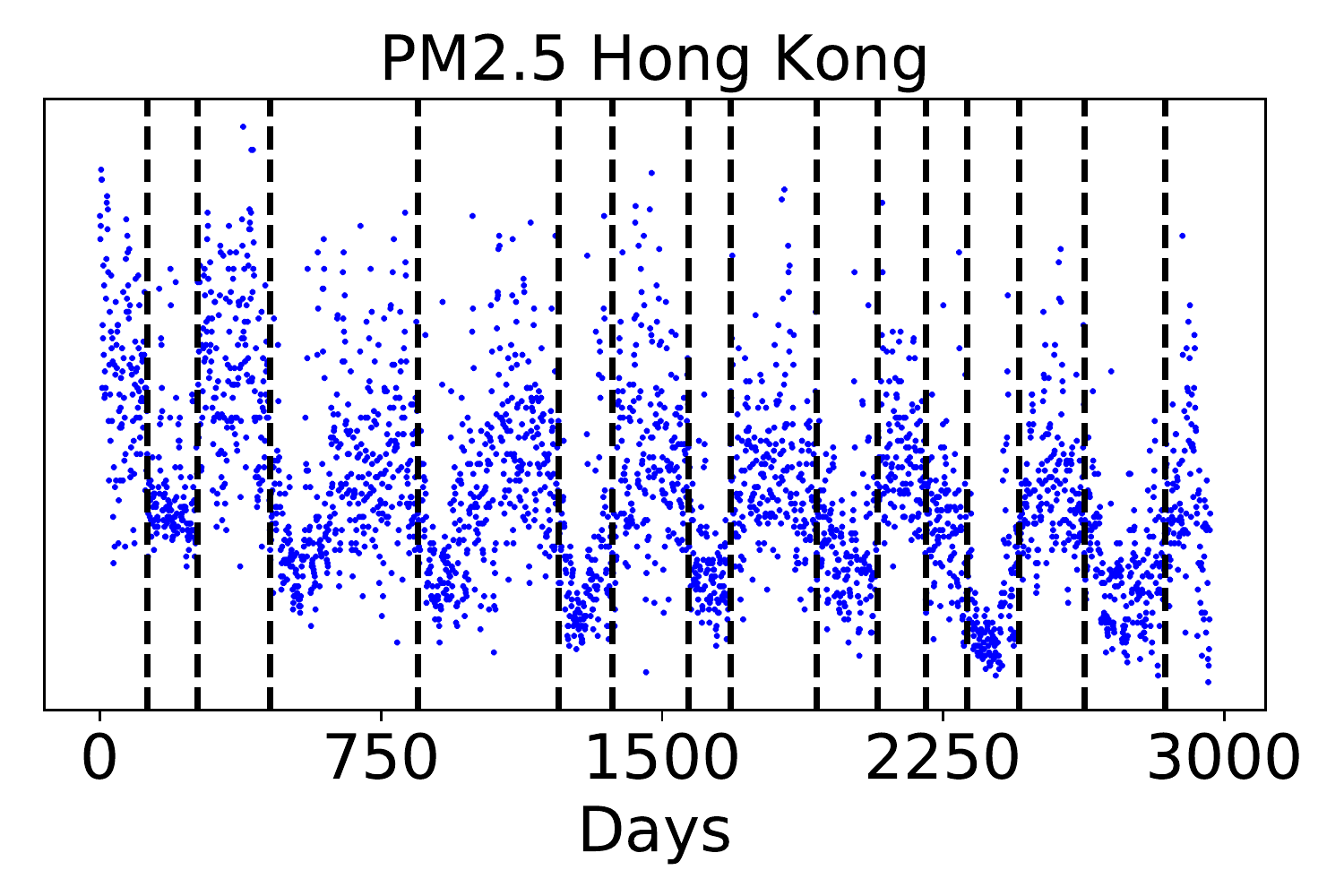} 
	}
\vspace{-0.1in} 
	\caption{The detection results returned by \texttt{RC-Cat} for two real-world data sets, well-log data and PM2.5 index data.
	In the experiment, we choose tuning parameters $b = c_0 \sqrt{\mathcal M \eta}/2$ and $\lambda = 1$.}\label{fig:realdata}	
	
\end{figure}

\section{Conclusion}
In this work, we provided a robust change detection algorithm based on influence functions that can deal with a fraction of arbitrary but weakly adversarial contamination. Key contributions to the vast literature on robust offline change detection methods include: (i)~The ability to handle non-i.i.d data along with contamination, when minimal assumptions are made on the distributions of the inliers. (ii)~A computationally appealing algorithm that is consistent. The algorithm itself is intuitive, and combines local search methods to segment the dataset. Also, empirical results confirm that the algorithm outperforms the state of the art offline change detection algorithm in terms of average detection times, demonstrating significant gains under heavy-contamination.

This work motivates change detection in multi-variate datasets, possibly in the presence of contamination, motivated by the appealing aspect of obtaining dimension-free robust estimation in high-dimension using  influence functions; see~\cite{CG17}. 

\clearpage

\bibliographystyle{plainnat}
\bibliography{refs}

\end{document}